\newcommand{\ang}[1]{\langle #1\rangle}
\renewcommand{\ang}[1]{\langle #1\rangle}
\newcommand{\RE}{\mathbb{R}}            
\newcommand{\UU}{\mathbb{U}}
\newcommand{\XX}{\mathbb{X}}
\newcommand{\eps}{\varepsilon}          
\newcommand{\ST}{\,:\,}                 
\newcommand{\bd}{\partial}
\DeclareMathOperator{\interior}{int}
\DeclareMathOperator{\Vor}{Vor}
\newcommand{\auguste}[1]{}
\newcommand{\dave}[1]{}
\title{Voronoi Diagrams in the Hilbert Metric}
\author{Auguste H. Gezalyan}{Department of Computer Science, University of Maryland, College Park, USA \and \url{}}{octavo@umd.edu}{ADD ORCID LATER}{}
\author{David M. Mount}{Department of Computer Science, University of Maryland, College Park, USA \and \url{https://www.cs.umd.edu/~mount/}}{mount@umd.edu}{https://orcid.org/0000-0002-3290-8932}{}
\authorrunning{A.\,H.\,Gezalyan and D.\,M.\,Mount}
\keywords{Voronoi diagrams, Hilbert metric, convexity, randomized algorithms}
\date{\today}
\begin{document}

\maketitle

\begin{abstract}

The Hilbert metric is a distance function defined for points lying within a convex body. It generalizes the Cayley-Klein model of hyperbolic geometry to any convex set, and it has numerous applications in the analysis and processing of convex bodies. In this paper, we study the geometric and combinatorial properties of the Voronoi diagram of a set of point sites under the Hilbert metric. Given any convex polygon $K$ bounded by $m$ sides, we present two algorithms (one randomized and one deterministic) for computing the Voronoi diagram of an $n$-element point set in the Hilbert metric induced by $K$. Our randomized algorithm runs in $O(m n + n (\log n)(\log m n))$ expected time, and our deterministic algorithm runs in time $O(m n \log n)$. Both algorithms use $O(m n)$ space. We show that the worst-case combinatorial complexity of the Voronoi diagram is $\Theta(m n)$.
\end{abstract}

\section{Introduction}

The Hilbert metric was introduced by David Hilbert in 1895~\cite{hilbert1895linie}. Given a convex body $K$ in $d$-dimensional space, it defines a distance function between any pair of points in the interior of $K$. (Definitions are presented in Section~\ref{sec:funk-hilbert}.) The Hilbert geometry has a number of natural properties. For example, geodesics in the Hilbert geometry are straight line segments. 

It generalizes hyperbolic geometry by adapting the Cayley-Klein model of hyperbolic geometry (on Euclidean balls) to any convex body. It is also invariant under projective transformations. Hilbert geometry provides new insights into classical questions from convexity theory. It also provides new insights into the study of metric and differential geometries (such as Finsler geometries). An excellent resource on the Hilbert geometries is the handbook on Hilbert geometry by Papadopoulos and Troyanov~\cite{papadopoulos2014handbook}.

We came to consider the Hilbert geometry because of its relevance to the topic of convex approximation. Efficient approximations of convex bodies have been applied to a wide range of applications, including approximate nearest neighbor searching both in Euclidean space~\cite{AFM17a} and more general metrics~\cite{AAFM19}, optimal construction of $\eps$-kernels~\cite{AFM17b}, solving the closest vector problem approximately~\cite{EHN11,RoV21,EiV21,NaV19}, computing approximating polytopes with low combinatorial complexity~\cite{AFM17c,AAFM20}. These works all share one thing in common---they approximate a convex body by covering it with elements that behave much like metric balls. These covering elements go under various names: Macbeath regions, Macbeath ellipsoids, Dikin ellipsoids, and $(2,\eps)$-covers. While these all behave like metric balls, the question is in what metric space? Abdelkader and Mount showed that these shapes are, up to constant factors, equivalent to Hilbert balls~\cite{AbM18}. Thus, a deeper understanding of how to compute within the Hilbert geometry can lead in a principled way to a deeper understanding of convex approximation.

In spite of its obvious appeals, there has been remarkably little work the design of algorithms in the Hilbert geometry on convex polygons and polytopes. One notable exception is the work of Nielsen and Shao, which investigates properties and efficient construction of Hilbert balls in convex polygons~\cite{nielsen2017balls}.

In this paper, we investigate perhaps the most fundamental computational question one might ask about a metric geometry: How to construct the Voronoi diagram of a set of point sites in the plane? Given any convex polygon $K$ bounded by $m$ sides, we present two algorithms for computing the Voronoi diagram of an $n$-element point set in the Hilbert metric induced by $K$. The first is a randomized incremental algorithm, which runs in $O(m n + n (\log n)(\log m n))$ expected time. The second is a deterministic algorithm based on divide-and-conquer, which runs in time $O(m n \log n)$. Both algorithms use $O(m n)$ space. We show that the worst-case combinatorial complexity of the Voronoi diagram is $\Theta(m n)$, so both algorithms are worst-case optimal up to logarithmic factors.

\section{Preliminaries}

Throughout, a \emph{convex body} $K$ in $\RE^d$ is a closed, compact, full-dimensional convex set in $\RE^d$. Let $\bd K$ and $\interior(K)$ denote its boundary and interior, respectively. Given points $p, q \in \RE^d$, let $\|p - q\|$ denote the Euclidean distance between these points. Given two distinct points $p, q \in K$, let $\chi(p,q)$ denote the \emph{chord} defined as the intersection of the line passing through $p$ and $q$ with $K$.

\subsection{Funk and Hilbert Metrics} \label{sec:funk-hilbert}

Recall that a \emph{metric space} is a pair $(\XX, d)$ where $\XX$ is a set and $d$ is a function $d: \XX \times \XX \rightarrow \RE^{\geq 0}$, which, for all $x, y, z \in \XX$ satisfies
\begin{itemize}
    \item $d(x,y) = 0 ~\Leftrightarrow~ x = y$
    \item $d(x,y) = d(y,x)$
    \item $d(x,y) \leq d(x,z) + d(z,y)$
\end{itemize}

Before defining the Hilbert metric, it is convenient to define a simpler (asymmetric) distance function called the \emph{Funk weak metric}. 

\begin{definition}[Funk weak metric]
Given a convex body $K$ in $\RE^d$ and two distinct points $p, q \in \interior(K)$, let $y$ denote point where a ray shot from $p$ to $q$ intersects $\bd K$. Define the \emph{Funk weak metric} to be 
\[
    F_K(p,q)
        ~ = ~ \ln \frac{\|p - y\|}{\|q - y\|},
\]
and define $F_K(p,p) = 0$ (see Figure~\ref{fig:hilbert-funk}(a)).
\end{definition}

\begin{figure}[htbp]
    \centerline{\includegraphics[scale=0.40]{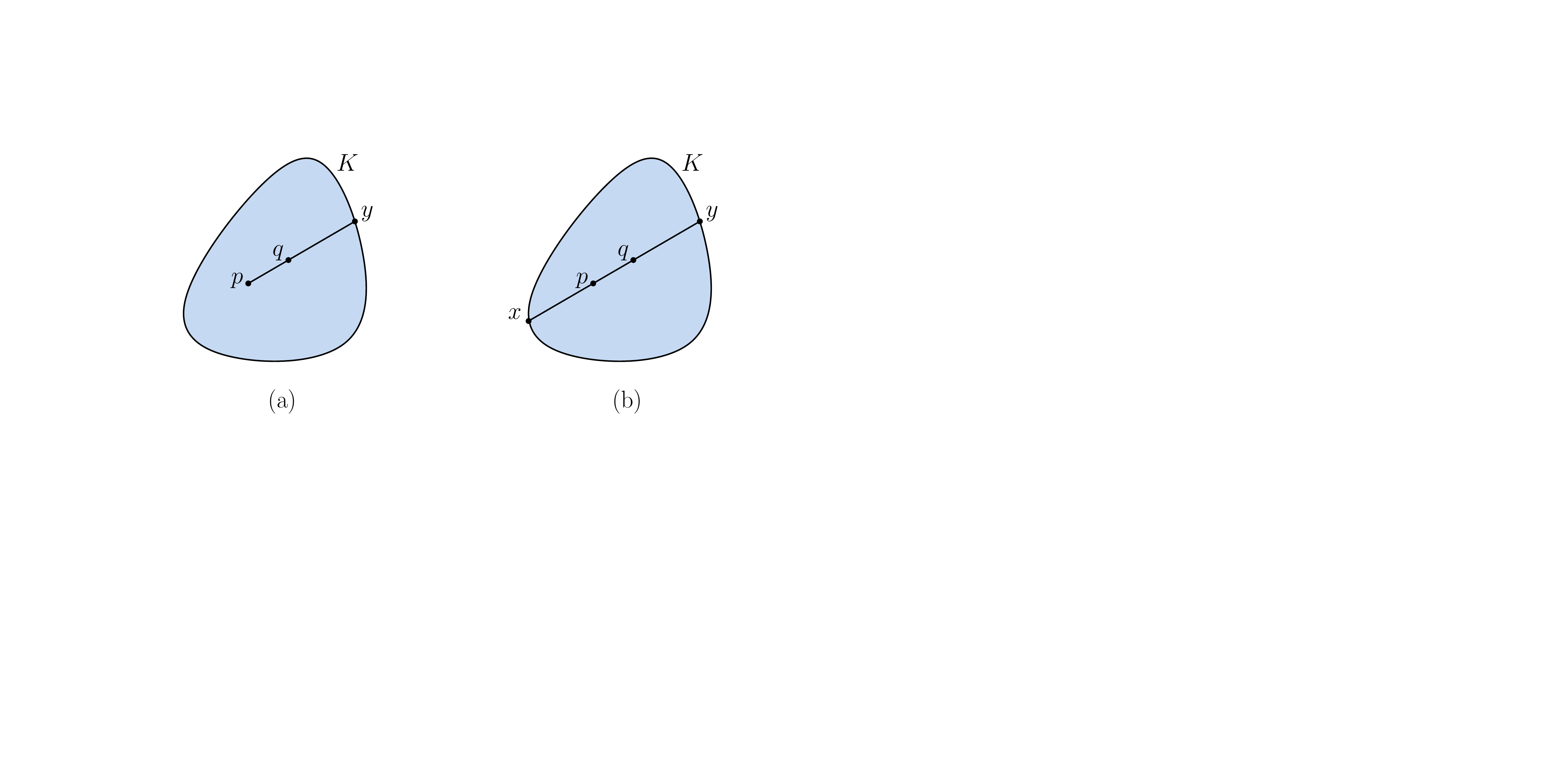}}
    \caption{(a) The Funk weak metric and (b) the Hilbert metric.}\label{fig:hilbert-funk}
\end{figure}

Observe that in the limit as $q$ approaches $y$ along the chord $\chi(p,y)$ the Funk distance increases to $+\infty$, and thus, the boundary is infinitely far away from any interior point of $K$. The Funk weak metric is not symmetric, but it satisfies the other two properties of a metric. (For a proof of the triangle inequality, see Yamada~\cite{yamada2014convex}.) Symmetrizing this yields the Hilbert metric~\cite{hilbert1895linie}.

\begin{definition}[Hilbert metric] 
Given a convex body $K$ in $\RE^d$ and two distinct points $p, q \in \interior(K)$, let $x$ and $y$ denote endpoints of the chord $\chi(p,q)$, so that the points are in the order $\ang{x, p, q, y}$. Define the \emph{Hilbert metric} to be
\[
    H_K(p,q)
        ~ = ~ \frac{F_K(p,q) + F_K(q,p)}{2}
        ~ = ~ \frac{1}{2} \ln \frac{\|p - y\|\|q - x\|}{\|q - y\|\|p - x\|},
\]
and define $H_K(p,p) = 0$ (see Figure~\ref{fig:hilbert-funk}(b)).
\end{definition}

This is indeed a metric since it satisfies the triangle inequality. Further,  if $q$ lies on the line segment between $p$ and $r$, then $H_K(p,q) + H_K(q,r) = H_K(p,r)$. As in the Funk weak metric, the boundary of $K$ is infinitely far away from any interior point. Observe that the quantity in the $\ln$ term in the definition of Hilbert is the cross ratio of $(p, q; y, x)$. It follows that the Hilbert metric is invariant under projective transformations. It is well known that straight line segments are geodesics in the Hilbert metric, but generally there may be multiple shortest paths between two points, and hence geodesics need not be line segments (see, e.g., \cite{busemann1955geodesics}). For further information, see the first chapters of the handbook on Hilbert geometry by Papadopoulos and Troyanov~\cite{papadopoulos2014handbook}.

\subsection{Balls in the Hilbert Metric}
Given a convex body $K$ in $\RE^d$, $p \in \interior(K)$, and $r \geq 0$, let 
\[
    B_K(p,r)
        ~ = ~ \{ q \in K \ST H_K(p, q) \leq r \}
\]
denote the \emph{Hilbert ball} of radius $r$ centered at $p$. Nielsen and Shao~\cite{nielsen2017balls} provided a characterization of Hilbert balls when $K$ is an $m$-sided convex polygon in $\RE^2$, showing that the ball is a polygon bounded by at most $2 m$ sides. We generalize their result to arbitrary convex polytopes in $\RE^d$. This provides an alternative (and more elementary) proof that Hilbert balls are convex (also proved in~\cite{papadopoulos2014funk,troyanov2014funk}).

\begin{lemma} \label{lem:hilbert-ball}
Given any convex polytope $K$ in $\RE^d$ bounded by $m$ facets, $p \in \interior(K)$, and $r \geq 0$, $B_K(p, r)$ is a convex polytope bounded by at most $m(m-1)$ facets. If $d = 2$, then $B_K(p, r)$ is a convex polygon bounded by at most $2 m$ sides.
\end{lemma}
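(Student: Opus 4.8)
The plan is to fix a center $p \in \interior(K)$, a radius $r \ge 0$, and analyze the distance function $H_K(p, \cdot)$ directly in terms of the facets of $K$. The key observation is that for a query point $q$, the chord $\chi(p,q)$ hits $\bd K$ at two points $x$ and $y$ (with order $\langle x, p, q, y\rangle$), and each of $x, y$ lies on some facet of $K$. Suppose $y$ lies on the facet supported by hyperplane $h_i$ and $x$ lies on the facet supported by hyperplane $h_j$. Then I would express the two Euclidean ratios $\|p-y\|/\|q-y\|$ and $\|q-x\|/\|p-x\|$ using the affine functions defining $h_i$ and $h_j$. Concretely, if $\ell_i(z)$ is an affine function vanishing on $h_i$, then along the line through $p$ and $q$ the ratio $\|q-y\|/\|p-y\|$ equals $\ell_i(q)/\ell_i(p)$ (this is the standard fact that a ratio of distances to a fixed hyperplane, measured along a line, is a ratio of the affine functional's values). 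Doing the same for $h_j$, the condition $H_K(p,q) \le r$ becomes $\ell_i(p)\,\ell_j(p) \big/ \big(\ell_i(q)\,\ell_j(q)\big) \le e^{2r}$, i.e.\ a \emph{linear} inequality in $q$ of the form $\ell_i(q)\,\ell_j(q) \ge e^{-2r}\,\ell_i(p)\,\ell_j(p)$ — wait, that product is quadratic; the right way is to note $\ell_i(p)\ell_j(p)$ is a constant and, after clearing, the boundary $H_K(p,q) = r$ within the region where $x \in h_j$, $y \in h_i$ is the locus $\ell_i(q)\,\ell_j(q) = c_{ij}$, which is a hyperplane only once we are careful. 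The cleaner route: inside the cone of directions for which the exit point $y$ lies on $h_i$ and the exit point $x$ (in the opposite direction) lies on $h_j$, the function $\tfrac12\ln\frac{\ell_j(p)\,\ell_i(q)^{-1}\cdots}{}$ — rather, $H_K(p,q)$ restricted to this cone is $\tfrac12\big(\ln\ell_i(p) - \ln\ell_i(q) + \ln\ell_j(p) - \ln\ell_j(q)\big)$ up to additive constants, and setting this equal to $r$ gives $\ell_i(q)\,\ell_j(q) = $ const, whose intersection with the cone — hmm.

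Let me restate the clean version I would actually carry out. Partition $\interior(K)$ by which facet contains the forward exit point $y$ and which contains the backward exit point $x$; this gives a subdivision into at most $O(m^2)$ cells, each the intersection of $\interior(K)$ with a pair of cones apexed at $p$. Within a single cell, where $y \in h_i$ and $x \in h_j$, I claim $B_K(p,r)$ is cut out by a single hyperplane. The reason: $H_K(p,q) \le r \iff \|p-y\|\,\|q-x\| \le e^{2r}\,\|q-y\|\,\|p-x\|$, and since $\|p-y\|, \|p-x\|$ are constants on the cell and $\|q-x\|, \|q-y\|$ are proportional to $\ell_j(q), \ell_i(q)$ — no, to differences. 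Actually $\|q - y\|$ is not proportional to $\ell_i(q)$ since $y$ moves with $q$. The correct identity is $\|q-y\|/\|p-y\| = \ell_i(q)/\ell_i(p)$ only when $y$ is the \emph{fixed} foot on $h_i$ of the line $pq$ — but $y$ \emph{is} determined by the line, and for a fixed line both $p$ and $q$ sit on it, so $\|q-y\|/\|p-y\|$ along that line equals $\ell_i(q)/\ell_i(p)$ exactly. Good — so $F_K(p,q) = \ln\frac{\|p-y\|}{\|q-y\|} = \ln\frac{\ell_i(p)}{\ell_i(q)}$, and likewise $F_K(q,p) = \ln\frac{\|q-x\|}{\|p-x\|} = \ln\frac{\ell_j(q)}{\ell_j(p)}$. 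Hence on this cell $H_K(p,q) = \tfrac12\ln\frac{\ell_i(p)\,\ell_j(q)}{\ell_i(q)\,\ell_j(p)}$, and $H_K(p,q) \le r$ becomes $\ell_j(q)/\ell_i(q) \le e^{2r}\,\ell_j(p)/\ell_i(p)$, i.e.\ $\ell_j(q) \le \lambda\,\ell_i(q)$ for the constant $\lambda = e^{2r}\ell_j(p)/\ell_i(p)$, which is indeed a single linear inequality in $q$. So $B_K(p,r)$, restricted to each cell, is a convex polyhedral region, and each bounding hyperplane $\ell_j(q) = \lambda_{ij}\ell_i(q)$ is associated with an \emph{ordered} pair $(i,j)$ of facets. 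Counting: $i \ne j$ is forced (if $i = j$ the line would have to be tangent), giving at most $m(m-1)$ distinct bounding hyperplanes, hence at most $m(m-1)$ facets of $B_K(p,r)$; together with the observation that $B_K(p,r)$ is an intersection of halfspaces — one checks $B_K(p,r) = \bigcap_{(i,j)} \{q : \ell_j(q) \le \lambda_{ij}\ell_i(q)\}$ over valid pairs — convexity follows immediately.

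For the planar case $d = 2$, I would refine the count. In the plane, the line through $p$ meets $\bd K$ in two points lying on two edges, and as the direction of the line rotates through $\pi$, the forward foot sweeps the boundary once (through $m$ edges) while the backward foot sweeps the \emph{antipodal} part of the boundary (also $m$ edges) in lockstep — so the ordered pairs $(i,j)$ that actually occur are not all $m(m-1)$ of them but only $O(m)$ of them, since the pair changes only when the line passes through a vertex of $K$, and there are $2m$ such events as the line rotates through angle $\pi$ (each vertex is hit once by the forward foot and once by the backward foot). Thus at most $2m$ distinct bounding edges appear, giving the $2m$ bound. The main obstacle I anticipate is being careful about (i) the sign/orientation conventions so that the $\ell_i$ are positive on $\interior(K)$ and the manipulations above are monotone, and (ii) verifying that the cell-by-cell description really does glue into a single convex polytope rather than merely a union of convex pieces — the slick way is to prove the global intersection-of-halfspaces formula $B_K(p,r) = \bigcap_{(i,j) \text{ valid}} \{\ell_j \le \lambda_{ij}\,\ell_i\}$ and check that outside the valid cone for $(i,j)$ the corresponding inequality is automatically satisfied (or redundant), which is a short convexity argument. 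The combinatorial $O(m)$ bound in the plane, while intuitively clear from the rotating-line picture, will require the most care to state rigorously.
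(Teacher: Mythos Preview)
Your approach is correct and arrives at the same structure as the paper, but by a different route. You work algebraically with the affine forms $\ell_i$ cutting out the facets and derive the identity $H_K(p,q) = \tfrac{1}{2}\ln\bigl(\ell_i(p)\,\ell_j(q)\big/\ell_i(q)\,\ell_j(p)\bigr)$ on each sector, from which linearity of the constraint $H_K(p,q)\le r$ is immediate. The paper instead argues geometrically: for each direction $u$ it forms the wedge $\widehat{C}(u)$ bounded by the two supporting hyperplanes at the chord endpoints, uses cross-ratio invariance under projection from the apex $h(u)\cap h(-u)$ to show that the Hilbert $r$-ball in this wedge is itself a wedge $\widehat{C}_r(u)$, and concludes $B_K(p,r)=\bigcap_u \widehat{C}_r(u)$. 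The bounding hyperplanes produced by the two arguments coincide---your $\{\ell_j=\lambda_{ij}\ell_i\}$ passes through $h_i\cap h_j$ and through the point at Hilbert distance $r$ along the chord, which is exactly the paper's $h_r(u)$. What the paper's route buys is a convexity proof for an \emph{arbitrary} convex body (the intersection over all $u$ makes sense without any facet structure); what yours buys is directness in the polytope case, with the $m(m-1)$ bound falling out by counting ordered facet pairs. For the obstacle you flag---gluing the per-sector halfspaces into a global intersection---the clean ``short convexity argument'' is monotonicity of the Hilbert metric under body inclusion ($K\subseteq$ wedge $\Rightarrow$ $B_K(p,r)\subseteq$ wedge's $r$-ball $=H_{ij}\cap H_{ji}$), which is exactly what the paper invokes. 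One small correction to your planar count: rotating the undirected line through $p$ by $\pi$ sweeps each of the $m$ vertices exactly once, giving $m$ events and $m$ unordered complementary pairs, hence $2m$ ordered pairs and at most $2m$ sides---your conclusion is right but the ``$2m$ events'' is an overcount.
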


\begin{proof}
For now, let us take $K$ to be any convex body (not necessarily a polytope). Let $\UU^d$ denote the set of all unit vectors in $\RE^d$. For each $u \in \UU^d$, consider the ray emanating from $p$ in direction $u$, and define $q_r(u) \in K$ to be the (unique) point along this ray whose Hilbert distance from $p$ is $r$. Clearly, the boundary of $B_K(p,r)$ is just the set of points $q_r(u)$ over all $u \in \UU^d$. 

For any $u \in \UU^d$, let $h(u)$ denote any supporting hyperplane of $K$ at the point where the ray emanating from $p$ in the direction $u$ intersects the boundary of $K$ (see Figure~\ref{fig:hilbert-ball}(a)). Define $h(-u)$ analogously for the ray emanating from $p$ in the direction $-u$. Let $g(u)$ ($=g(-u)$) denote the $(d-2)$-dimensional affine subspace where $h(u)$ and $h(-u)$ intersect. (To avoid dealing with objects at infinity, let us assume that $h(u)$ and $h(-u)$ are not parallel, which we can achieve through an infinitesimal perturbation of $K$.) Let $h_r(u)$ denote the $(d-1)$-hyperplane passing through both $g(u)$ and $q_r(u)$, and define $h_r(-u)$ analogously for $q_r(-u)$. (These are both well defined because neither $q_r(u)$ nor $q_r(-u)$ can lie on $g(u)$.) 

\begin{figure}[htbp]
    \centerline{\includegraphics[scale=0.35]{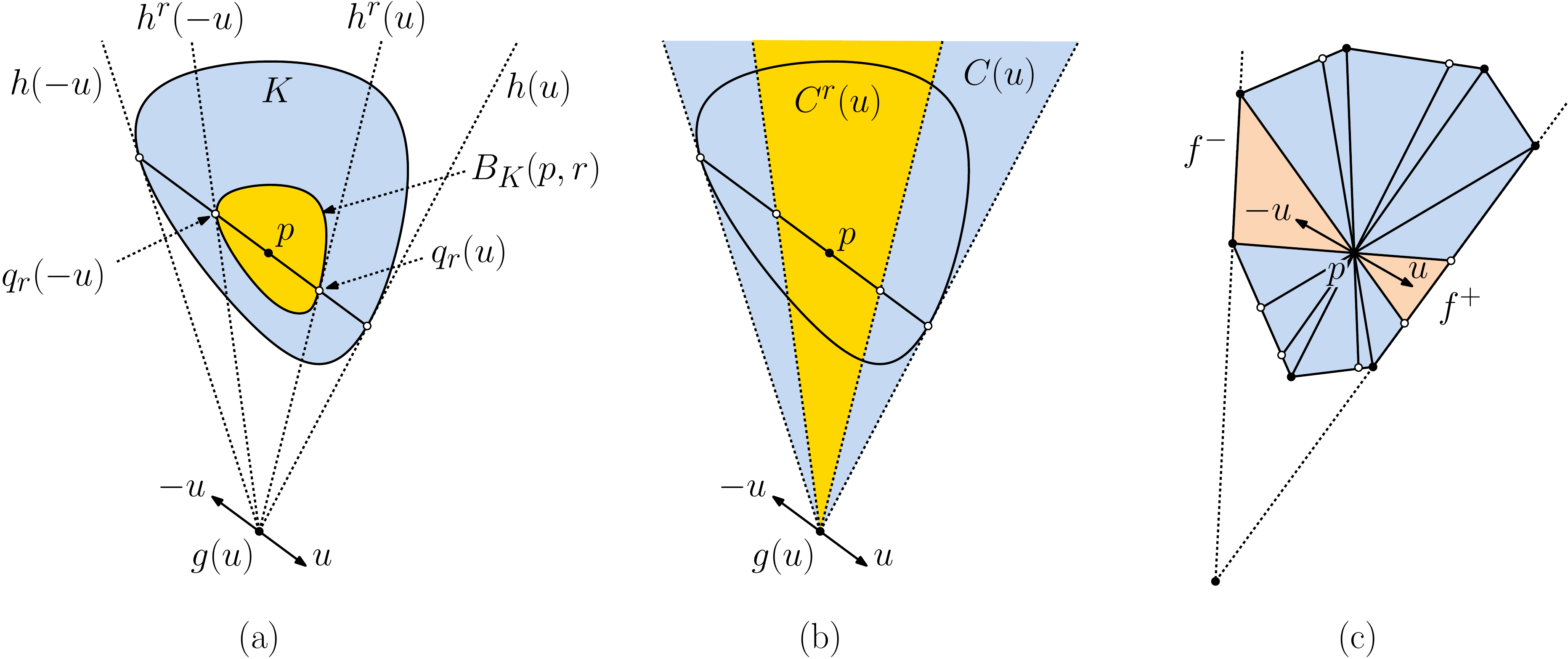}}
    \caption{Proof of Lemma~\ref{lem:hilbert-ball} (showing just the intersection with the plane $f$).}\label{fig:hilbert-ball}
\end{figure}

To simplify matters, let us restrict attention to the $2$-dimensional affine subspace $f$ that is orthogonal to $g(u)$ and passes through $p$. It follows that $g(u) \cap f$ is a single point, which we may take to be the origin. The respective intersections of $h(u)$, $h(-u)$, $h_r(u)$, and $h_r(-u)$ with $f$ are all lines that pass through the origin (since they all contain $g(u)$). Let $C(u)$ denote the convex cone on $f$ whose apex is at the origin and is bounded by $h(u)$ and $h(-u)$ (see Figure~\ref{fig:hilbert-ball}(b)). It follows from basic properties of projective geometry that every line passing through $p$ on this plane is cut by $h(u)$, $h(-u)$, $h_r(u)$, and $p$ into four points that share the same cross ratio. This implies that every point on $h_r(u)$ is at Hilbert distance $r$ from $p$. The same applies symmetrically to $h_r(-u)$, and therefore the sub-cone of $C(u)$ bounded by $h_r(u)$ and $h_r(-u)$ is just the Hilbert ball of radius $r$ centered at $p$ for $C(u)$. (This was observed by Nielsen and Shao in their analysis of the two-dimensional case.) Let us call this sub-cone $C_r(u)$.

We can now extend this observation back to the full $d$-dimensional space, by considering the Minkowski sum $f \oplus g(u)$ (that is, the set resulting from pairwise sums of points of $f$ and points of $g(u)$). Let $\widehat{C}(u) = C(u) \oplus g(u)$ and let $\widehat{C}_r(u) = C_r(u) \oplus g(u)$. We assert that $\widehat{C}_r(u)$ is the Hilbert ball of radius $r$ centered at $p$ with respect to $\widehat{C}(u)$. This follows because ratios of lengths are preserved under orthogonal projection, and so any chord of $\widehat{C}(u)$ that passes through $p$ is cut by $\widehat{C}_r(u)$ is the same proportions as is the orthogonal projection of the chord onto $f$, which we have shown above yields Hilbert distances equal to $r$. 

Note that $p \in K \subseteq \widehat{C}_r(u)$, and therefore Hilbert distances from $p$ in $\widehat{C}_r(u)$ are at least as large as they are in $K$. Therefore, for all $u \in \UU^d$, $B_K(p, r) \subseteq \widehat{C}_r(u)$, and hence
\[
    B_K(p, r)
        ~ \subseteq ~ \bigcap\nolimits_{u \in \UU^d} \widehat{C}_r(u).
\]
On the other hand, by definition of $\widehat{C}_r(u)$, we know that $\widehat{C}_r(u)$ and $B_K(p, r)$ both cover the exactly the same portion of the chord parallel to $u$ passing through $p$. Since these chords cover all the boundary points of $B_K(p,r)$, we have
\[
    B_K(p, r)
        ~ = ~ \bigcap\nolimits_{u \in \UU^d} \widehat{C}_r(u).
\]
Clearly, $\widehat{C}_r(u)$ is convex for any $u \in \UU^d$, and because convex sets are closed under (infinite) intersections, $B_K(p, r)$ is also convex, as desired.

Suppose now that $K$ is a convex polytope in $\RE^d$ bounded by $m$ $(d-1)$-dimensional facets. We say that two facets $f^+$ and $f^-$ are \emph{complimentary} if there exists $u \in \UU^d$ such that the rays emanating from $p$ in the directions $u$ and $-u$ hit $f+$ and $f^-$, respectively (see Figure~\ref{fig:hilbert-ball}(c)). We can partition the elements of $\UU^d$ into equivalence classes according to the associated complimentary pair. All the unit vectors $u$ from any one equivalence class share the same supporting hyperplanes $h(u)$ and $h(-u)$, and therefore all of them contribute the same two facets the boundary of $B_K(p,r)$. Clearly, there are at most $\binom{m}{2}$ complimentary facet pairs, and hence there are at most $2 \binom{m}{2} = m(m-1)$ facets bounding $B_K(p,r)$, as desired.

When $d = 2$ and $K$ is an $m$-sided convex polygon, the number of complimentary facets (or complimentary edges) is at most $2 m$. To see this, consider the chords $\chi(v,p)$ for each each of the $m$ vertices of $K$. This partitions $\bd K$ into at most $2 m$ boundary intervals, each corresponding to a different complimentary pair. This completes the proof.
\end{proof}

\section{Characterization of Voronoi Diagrams in the Hilbert Metric}

Using our understanding of Hilbert balls we can characterize Voronoi diagrams in the Hilbert Metric. Throughout, let $K$ denote a convex polygon in $\RE^2$, and unless otherwise stated, distances will be in the Hilbert metric induced by $K$, which we denote simply by $d(\cdot,\cdot)$. Let $S$ denote a set of $n$ points lying within $K$'s interior, which we call \emph{sites}. 

For $p \in S$, define its \emph{Voronoi cell} to be
\[
    V(p)
        ~ = ~ V_S(p)
        ~ = ~ \big\{ q \in K \ST d(q,p) \leq d(q,p'), \,\forall p' \in S \setminus \{p\} \big\}.
\]
Although points on $K$'s boundary are infinitely far from points in $K$'s interior, we can compare the relative distances a fixed boundary point and two interior points by considering the limit as an interior point approaches this boundary point. The \emph{Voronoi diagram} of $S$ in the Hilbert metric induced by $K$, denoted $\Vor_K(S)$, is the cell complex of $K$ induced by the Voronoi cells $V(p)$ for all $p \in S$. We assume that the points of $S$ are in general position, and in particular, the line passing through any pair of sites of $S$ and the lines extending any two edges of $K$ are not coincident at a common point (including all three being parallel). When this happens, the bisectors separating Voronoi cells can widen into 2-dimensional regions.

Recall that $R \subseteq \RE^d$ is a \emph{star} (or is \emph{star-shaped}) with respect to a point $p \in R$ if for each $q \in R$, the line segment $p q$ lies within $R$. We next show that Hilbert Voronoi cells satisfy this.

\begin{lemma} \label{lem:star-shaped}
Voronoi cells in the Hilbert Metric are stars with respect to their defining sites.
\end{lemma}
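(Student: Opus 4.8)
The plan is to exploit two facts already established for the Hilbert metric: that straight line segments are geodesics with distances adding along them (so if $z$ lies on the segment $pq$ then $d(p,z) + d(z,q) = d(p,q)$), and that $H_K$ satisfies the triangle inequality. Fix a site $p \in S$, let $q \in V(p)$, and let $z$ be an arbitrary point on the segment $pq$. The goal is to show that $z \in V(p)$, i.e., $d(z,p) \le d(z,p')$ for every other site $p' \in S \setminus \{p\}$.

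First I would write, for any $p' \in S \setminus \{p\}$, the triangle inequality in the form $d(q,p') \le d(q,z) + d(z,p')$, which rearranges to $d(z,p') \ge d(q,p') - d(q,z)$. Next, since $q \in V(p)$ we have $d(q,p') \ge d(q,p)$, and since $z$ lies on the segment between $q$ and $p$, additivity of Hilbert distance along geodesics gives $d(q,p) = d(q,z) + d(z,p)$. Combining these three relations yields
\[
    d(z,p') ~ \ge ~ d(q,p') - d(q,z) ~ \ge ~ d(q,z) + d(z,p) - d(q,z) ~ = ~ d(z,p),
\]
so $z \in V(p)$. As this holds for every $z$ on the segment $pq$, the cell $V(p)$ is star-shaped with respect to $p$.

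The only point requiring a little care is that a Voronoi cell may contain points of $\bd K$, where Hilbert distances are infinite; there the inequality $d(z,p) \le d(z,p')$ is to be interpreted (as in the definition of $\Vor_K(S)$) via the limit as an interior point approaches the boundary point, and the same chain of inequalities carries through in the limit. I do not anticipate a genuine obstacle: the argument is essentially the general fact that Voronoi cells are star-shaped with respect to their sites in any metric in which line segments are geodesics along which distance is additive, and the Hilbert metric satisfies exactly this.
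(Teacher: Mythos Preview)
Your proof is correct and uses essentially the same ingredients as the paper's own argument: additivity of the Hilbert distance along the segment through the site together with the triangle inequality. The only cosmetic difference is that the paper phrases it as a proof by contradiction (assuming a point on the segment lies outside the cell and deriving a violation of the triangle inequality), whereas you give the equivalent direct chain of inequalities.
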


\begin{proof}
If this were not the case, there would exist a site $p$ and points $x,y \in K$ such that $x \in V(p)$, $y \notin V(p)$, and $y$ lies on the line segment $p x$. By collinearity, $d(p,x) = d(p,y) + d(x,y)$. Letting $q$ be the closest site to $y$, we have $d(q,y) < d(p,y)$ and $d(p,x) \leq d(q,x)$. Combining these we have $d(q,x) + d(p,y) > d(q,y) + d(p,x)$, or equivalently $d(q,x) > d(q,y) + d(x,y)$. But this violates the triangle inequality, yielding a contradiction. 

\end{proof}

\subsection{Bisectors in the Hilbert Metric} \label{sec:bisector}

Given two sites $p, p' \in \interior(K)$, we define their Hilbert bisector, denoted the \emph{$(p,p')$-bisector}, to be $\{z \in K : d_K(z,p) = d_K(z,p')\}$. We will explore the conditions for a point $z$ to lie on the bisector. Let $x$ and $y$ denote the endpoints of the chord $\chi(z,p)$ and define $x'$ and $y'$ analogously for $\chi(z,p')$. Label these points in the order $\ang{x, z, p, y}$ and $\ang{x', z, p', y'}$ (see Figure~\ref{fig:boundary-bisector}(a)). Finally, let $\ell_x$, $\ell_p$, and $\ell_y$ denote the lines passing through the line segments $x x'$, $p p'$ and $y y'$, respectively. If these three lines are coincident on some point $q$ then by basic properties of projective geometry, the cross ratios $(z,p; y,x)$ and $(z,p'; y',x')$ are equal. It follows that $d_K(z,p) = d_K(z,p')$, and hence $z$ is on the bisector. If not, then the cross ratios are different and the Hilbert distances are different. Observe that as $z$ approaches the boundary of $K$ (on the same side of $\ell_p$ as $\ell_x$), the points $z$, $x$, and $x'$ converge on a common point on $\bd K$, and $\ell_x$ approaches a support line for $K$ at this point (see Figure~\ref{fig:boundary-bisector}(b)). Thus, we obtain the following characterization of bisector points.

\begin{figure}[htbp]
    \centerline{\includegraphics[scale=0.35]{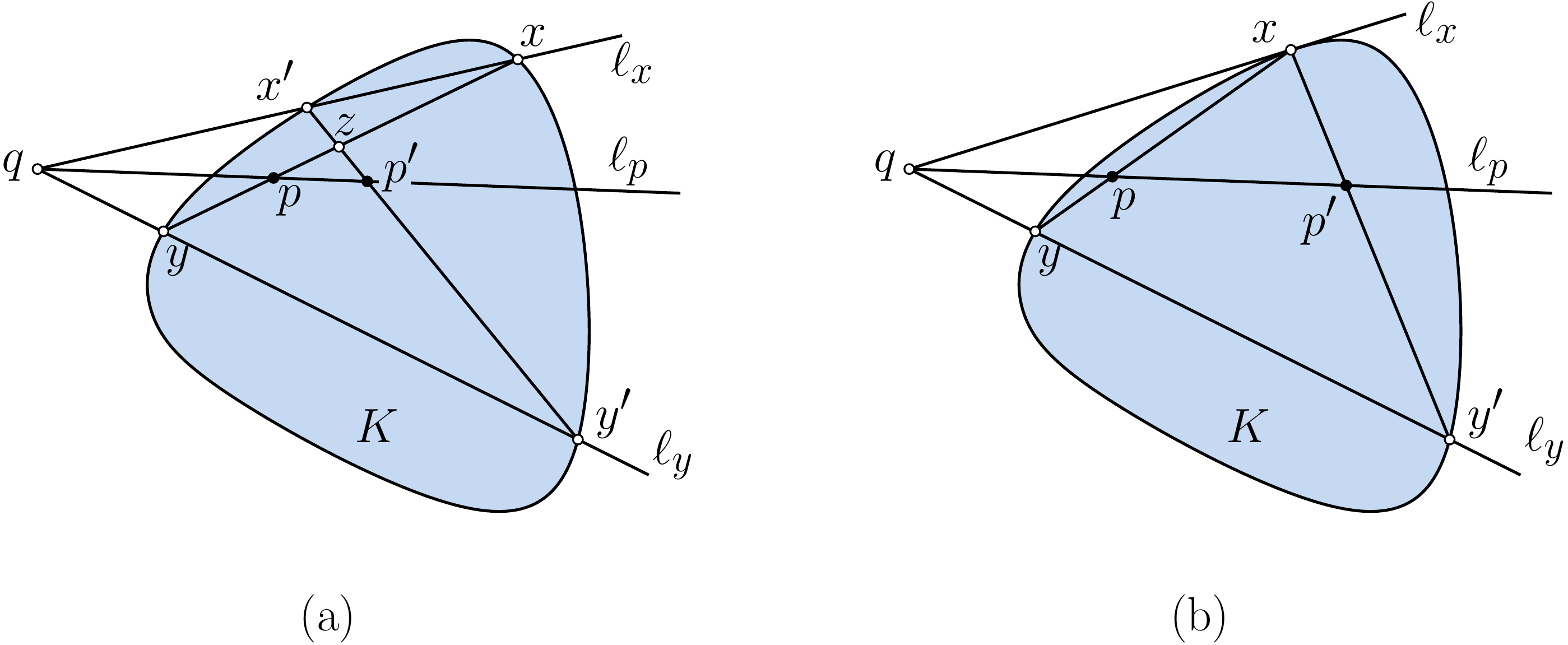}}
    \caption{Conditions for a point $z$ to lie on the Hilbert bisector between sites $p$ and $p'$.} \label{fig:boundary-bisector}
\end{figure}

\begin{lemma} \label{lem:boundary-bisector}
Given a convex body $K$ in $\RE^2$, sites $p, p' \in \interior(K)$ and any other point $z \in \interior(K)$, $z$ lies on the $(p,p')$-bisector if any only if lines $\ell_x$, $\ell_p$, and $\ell_y$ (defined above) are coincident. Further, a point $x \in \bd K$ is on the Hilbert bisector if the coincidence holds when $\ell_x$ is any support line at $x$.
\end{lemma}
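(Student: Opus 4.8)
The plan is to convert the bisector condition into an equality of cross ratios and then settle it with a single classical fact: a central projection between two lines preserves cross ratios. By the definition of the Hilbert metric, $d_K(z,p) = \tfrac12\ln(z,p;y,x)$ and $d_K(z,p') = \tfrac12\ln(z,p';y',x')$, where $(\cdot,\cdot;\cdot,\cdot)$ denotes the cross ratio of four collinear points. Since $\ln$ is injective, $z$ lies on the $(p,p')$-bisector if and only if $(z,p;y,x) = (z,p';y',x')$, so it suffices to prove that this equality holds exactly when $\ell_x$, $\ell_p$, $\ell_y$ are coincident.

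For the \emph{if} direction, suppose $\ell_x$, $\ell_p$, $\ell_y$ all pass through a common point $q$ (possibly at infinity, i.e.\ the three lines are parallel). Let $\pi$ be the perspectivity with center $q$ carrying the chord $\chi(z,p)$ to the chord $\chi(z,p')$. Since $q\in\ell_x$, $q\in\ell_p$, and $q\in\ell_y$ we have $\pi(x)=x'$, $\pi(p)=p'$, $\pi(y)=y'$; and since $z$ lies on both chords, $\pi(z)=z$. Invariance of the cross ratio under $\pi$ then gives $(z,p;y,x) = (z,p';y',x')$, so $z$ is on the bisector. For the \emph{only if} direction, assume $z$ is on the bisector and let $q = \ell_x\cap\ell_p$ (their common point at infinity if they are parallel). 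With $\pi$ the perspectivity of center $q$ from $\chi(z,p)$ to $\chi(z,p')$ we again get $\pi(x)=x'$, $\pi(p)=p'$, $\pi(z)=z$, while $\pi(y)=\bar y$ for some $\bar y\in\chi(z,p')$. Cross-ratio invariance yields $(z,p';\bar y,x') = (z,p;y,x) = (z,p';y',x')$; because the cross ratio is an injective function of its fourth argument once the first three are fixed, $\bar y = y'$, hence $q$, $y$, $y'$ are collinear and all three lines meet at $q$. I would also record the mild nondegeneracy needed for $\pi$ to be defined: $q$ lies on neither chord (this follows from $p\neq p'$ and the fact that $z$ is not on line $pp'$, so $\chi(z,p)$ and $\chi(z,p')$ are distinct chords through $z$), and general position rules out $q=z$.

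For a boundary point $w\in\bd K$, the claim is read through the limiting comparison described just before the lemma: let $z\to w$ along interior points approaching $w$ from the side of $\ell_p$ containing the near-endpoints $x,x'$. Then $x\to w$ and $x'\to w$ as well, so in the limit the line $\ell_x$ through $xx'$ tends to a line through $w$ that cannot enter $\interior(K)$, i.e.\ a support line of $K$ at $w$; meanwhile $\ell_p$ is the fixed line $pp'$ and $\ell_y$ (through the far endpoints $y,y'$) varies continuously. Since both the cross ratios and the incidence condition vary continuously, applying the equivalence just proved along the path and passing to the limit shows that $w$ is on the bisector precisely when $\ell_x$, $\ell_p$, $\ell_y$ are coincident for the support line $\ell_x$ obtained in this limit, which is the asserted sufficiency.

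The delicate part is this last step: I would need to argue carefully that the line $\ell_x$ actually converges (rather than oscillating) and to identify which support line it converges to when $w$ is a vertex of $K$, and to track the cases where the meeting point $q$ runs off to infinity so that ``coincident'' must be understood projectively. The projective core of the argument (the two perspectivity arguments) is essentially the calculation sketched in the paragraph preceding the lemma, and its only subtlety is checking that each perspectivity has a legitimate center; I expect no real difficulty there.
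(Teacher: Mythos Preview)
Your proposal is correct and follows essentially the same approach as the paper: the paper's argument is the paragraph immediately preceding the lemma (there is no separate proof environment), and it too reduces the bisector condition to equality of cross ratios and appeals to projective invariance under a perspectivity centered at the common point $q$, with the boundary case handled by the same limiting observation. Your write-up is more careful---especially the converse via injectivity of the cross ratio in one slot and the nondegeneracy checks on the center of the perspectivity---but the underlying idea is identical.
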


When $K$ is an $m$-sided convex polygon in $\RE^2$, we can provide a more precise characterization of the bisectors. Given two sites $p, p' \in \interior(K)$, we will show below that the $(p,p')$-bisector is a piecewise curve where the pieces can be described as rational parametric functions of bounded degree. Assuming this for now, we can characterize the breakpoints in this curve. Letting $\{v_1, \ldots, v_m\}$ denote $K$'s vertices, the $2 m$ chords $\chi(v_i,p)$ subdivide $K$ into $2 m$ triangular regions, which we call $p$'s \emph{sectors} with respect to $K$ (see Figure~\ref{fig:sector}(a)%
\footnote{Throughout, figures showing Hilbert bisectors are illustrative of general concepts and are not geometrically accurate.}
).

\begin{figure}[htbp]
    \centerline{\includegraphics[scale=0.35]{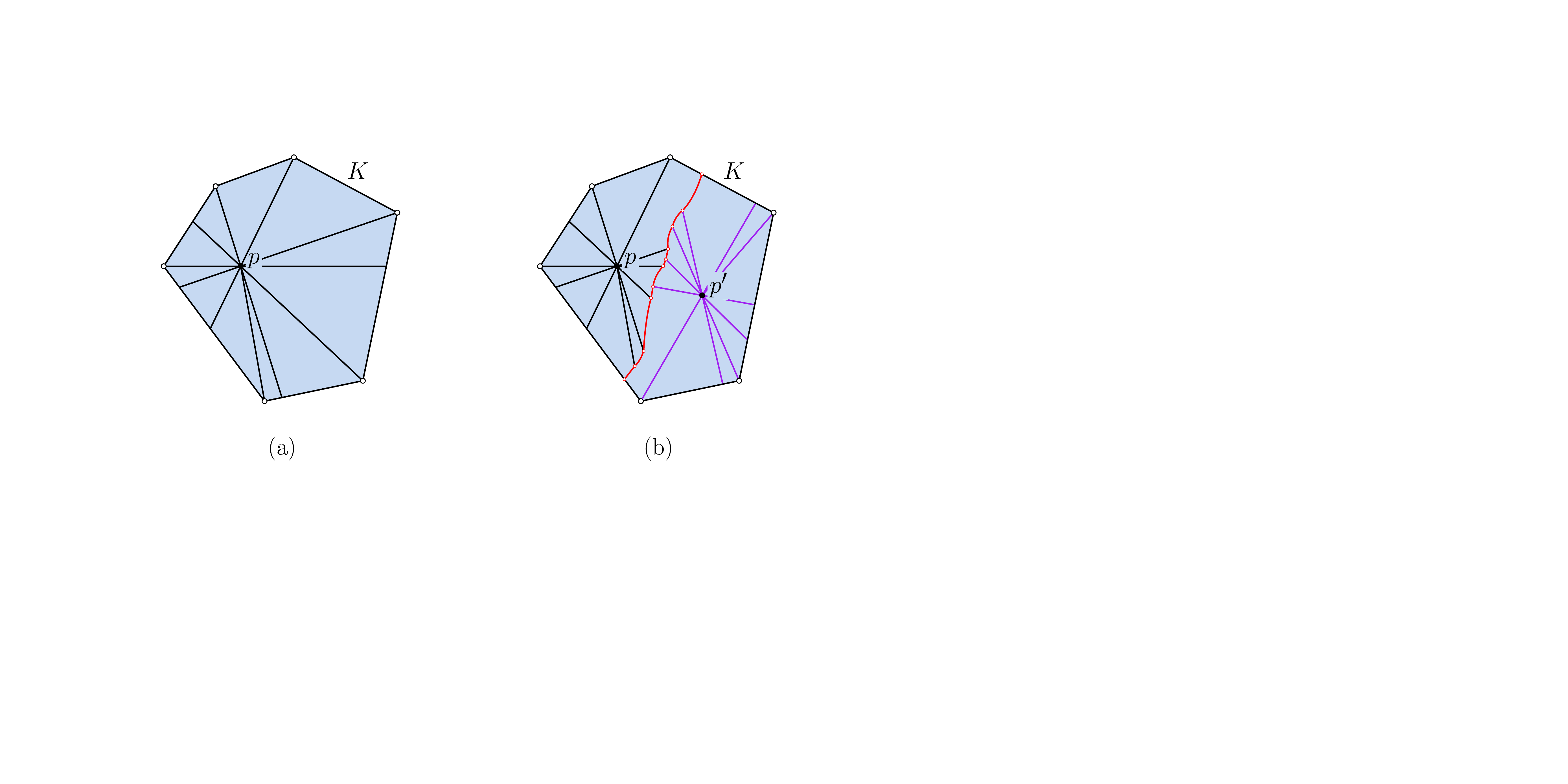}}
    \caption{(a) The sectors of $p$ with respect to $K$ and (b) a (not geometrically accurate) rendering of the Hilbert bisector between $p$ and $p'$.} \label{fig:sector}
\end{figure}

\subsection{Bisector Segments and Combinatorial Complexity} \label{sec:complexity}

For each point $z$ on the $(p,p')$-bisector, let $e$ and $f$ denote the edges where the endpoints of chord $\chi(z,p)$ intersect $\bd K$, and let $e'$ and $f'$ denote the corresponding edges of $\chi(z,p')$. Observe that the pair $(e, f)$ is uniquely determined by the sector of $p$ containing $z$, and the pair $(e', f')$ is similarly determined by the sector of $p'$ containing $z$. Therefore, the points of the $(p,p')$-bisector can be grouped into a discrete set of equivalence classes based on their sector memberships with respect to $p$ and $p'$ (see Figure~\ref{fig:sector}(b)). We refer to these as \emph{bisector segments}. Observe that by the star-shaped nature of Voronoi cells, as we travel along the bisector, we encounter the (up to $2 m$) sectors of $p$ in cyclic order (say clockwise as in the figure), and we visit (up to $2 m$) the sectors of $p'$ in the opposite cylic order (counterclockwise). We will see below that, each bisector segment can be described by a simple parametric function involving $p$, $p'$, and the four edges defining the equivalence class. Star-shapedness implies that the bisector is simply connected. Combining the total number of sectors involved for each site, we have:

\begin{lemma} \label{lem:bisector-complexity}
Given an $m$-sided convex polygon $K$ in $\RE^2$ and sites $p, p' \in \interior(K)$, the $(p,p')$-bisector is a simply connected piecewise curve consisting of at most $4 m$ bisector segments.
\end{lemma}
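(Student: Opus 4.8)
The plan is to bound the number of bisector segments by carefully counting how the sector memberships of $z$ with respect to $p$ and $p'$ can change as $z$ traverses the bisector, and then to argue that the bisector is a single connected curve (a ``piecewise curve'' rather than a collection of disconnected arcs). For the connectivity part, I would invoke \Cref{lem:star-shaped}: each Voronoi cell is a star with respect to its site, and the $(p,p')$-bisector (in the two-site case) is the common boundary $\bd V(p) \cap \bd V(p')$. Since $V(p)$ is star-shaped with respect to $p$, its boundary is a topological circle that can be written as $\{q_\theta : \theta \in [0,2\pi)\}$ where $q_\theta$ is the farthest point of $V(p)$ from $p$ in direction $\theta$; the portion of this boundary shared with $V(p')$ forms a connected arc (an interval of angles), since if it were disconnected the complement of $V(p)\cup V(p')$ near $p$'s cell would behave badly — more carefully, with only two sites, $V(p)$ and $V(p')$ partition $K$ by a single separating curve, which is automatically connected and has the two ``ends'' on $\bd K$ (or closes up, but general position rules out degenerate cases). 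This gives simple connectivity.

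Next I would count breakpoints. The breakpoints of the bisector are exactly the points $z$ where the pair of edges $(e,f)$ hit by $\chi(z,p)$ changes, or the pair $(e',f')$ hit by $\chi(z,p')$ changes. A change of the first kind happens precisely when $z$ crosses one of the $2m$ chords $\chi(v_i,p)$ bounding $p$'s sectors; a change of the second kind happens when $z$ crosses one of the $2m$ chords $\chi(v_j',p')$ bounding $p'$'s sectors. So naively there could be up to $4m$ breakpoints and hence up to $4m+1$ segments. To get the sharper bound of $4m$ segments, I would use the monotonicity observation already stated in the text: as $z$ moves monotonically along the (connected, simply connected) bisector, the sector of $p$ containing $z$ advances monotonically in cyclic order and visits each of the $\le 2m$ sectors of $p$ at most once, and simultaneously the sector of $p'$ advances monotonically in the opposite cyclic order, again visiting each of its $\le 2m$ sectors at most once. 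Therefore the sequence of $(e,f)$-labels changes at most $2m$ times along the bisector and the sequence of $(e',f')$-labels changes at most $2m$ times, for a total of at most $4m$ equivalence classes, i.e. at most $4m$ bisector segments.

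To make the monotonicity rigorous I would argue as follows: parametrize $\bd V(p)$ by the direction $\theta$ of the ray from $p$, as above; the bisector is the image of an interval $I$ of $\theta$-values. As $\theta$ increases through $I$, the ray from $p$ through $q_\theta$ rotates monotonically, so it sweeps across the vertices $v_1,\dots,v_m$ of $K$ in cyclic order, crossing each chord $\chi(v_i,p)$ at most once (in fact the relevant bound is $2m$ because a rotating ray direction passes through each vertex direction once per full turn, and the bisector subtends at most a full turn — here general position and the fact that the bisector does not ``wrap'' more than once are what we need). Hence the label $(e,f)$ changes at most $2m$ times. For $p'$: along the same parametrization, the point $q_\theta$ also moves monotonically along $\bd V(p') = \bd V(p)$ (same curve), so the ray from $p'$ through $q_\theta$ likewise rotates monotonically, though in the opposite sense as noted, and by the identical argument the label $(e',f')$ changes at most $2m$ times. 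Adding the two bounds gives at most $4m$ segments.

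The main obstacle I expect is pinning down the monotonicity/``at most once per turn'' claim precisely enough to avoid an off-by-one or a factor-of-two slip: one must verify that the bisector, as a curve, rotates through an angle of at most $2\pi$ as seen from $p$ (and likewise from $p'$), so that each of the $2m$ sector-boundary chords is crossed at most once rather than, say, twice. This follows from star-shapedness of $V(p)$ (the boundary direction $\theta$ is genuinely monotone, not just locally monotone) together with the fact that the bisector is an arc of that boundary and hence subtends an angular interval of length $\le 2\pi$; but it is the one place where the geometric picture has to be converted into a clean combinatorial count, and the general-position assumption (no three of the relevant lines concurrent) is what keeps the crossings transverse and the count exact.
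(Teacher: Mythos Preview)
Your proposal is correct and follows essentially the same approach as the paper: the paper's justification (given in the paragraph immediately preceding the lemma rather than in a separate proof environment) invokes star-shapedness of the Voronoi cells for simple connectivity, then observes that traversing the bisector visits the at most $2m$ sectors of $p$ in cyclic order and the at most $2m$ sectors of $p'$ in the opposite cyclic order, yielding at most $4m$ segments. Your write-up is in fact more careful than the paper's brief argument, particularly in spelling out why the angular parameter from $p$ is monotone along the bisector and why this bounds each family of sector crossings by $2m$.
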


Given a set of $n$ sites $S$ in $K$, the Voronoi diagram consists of a collection of $n$ Voronoi cells. The intersection of two cells $V(p)$ and $V(p')$ (if nonempty) is a portion of $(p,p')$-bisector, which call a \emph{Voronoi edge}. As shown in the above lemma, each such edge is composed of at most $4 m = O(m)$ bisector segments. A cell's boundary may also contain a portion of the boundary of $K$. The intersection of two Voronoi edges (if nonempty) is a \emph{Voronoi vertex}. Because the diagram is a planar graph, we have the following bounds by a straightforward application of Euler's formula. 

\begin{lemma} \label{lem:total-complexity}
Given an $m$-sided convex polygon $K$ in $\RE^2$ and a set of $n$ sites $S$ in $K$, $\Vor_K(S)$ has $n$ Voronoi cells, at most $3 n$ Voronoi edges, and at most $2 n$ Voronoi vertices. Each Voronoi edge consists of at most $4 m$ bisector segments. Therefore, the entire diagram has total combinatorial complexity $O(m n)$. The average number of Voronoi edges per cell is $O(1)$, and the average number of bisector segments per cell is $O(m)$.
\end{lemma}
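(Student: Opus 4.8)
The plan is to realize $\Vor_K(S)$ as a cellular subdivision of $K$ and apply Euler's formula for the three combinatorial bounds, then fold in Lemma~\ref{lem:bisector-complexity} for the statements about bisector segments, total complexity, and averages. The count of $n$ Voronoi cells is immediate: each site $p$ lies in $V(p)$ since $d(p,p)=0$, so all cells are nonempty and distinct sites give distinct cells; moreover a neighborhood of $p$ lies in $V(p)$, so each cell is full-dimensional. By Lemma~\ref{lem:star-shaped} each $V(p)$ is star-shaped about $p$, hence simply connected, and under the general-position assumption (no bisector widens into a $2$-dimensional region) $V(p)$ equals the closure of its interior. Thus $K$ is partitioned into $n$ topological disks which, together with $\bd K$, form a cellular planar subdivision whose $1$-skeleton I will analyze.

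For the combinatorial bounds I would introduce the planar graph $G$ whose vertices are (i) the interior Voronoi vertices, (ii) the points where a Voronoi edge meets $\bd K$, and (iii) one arbitrary extra point of $\bd K$ in the exceptional case that no Voronoi edge meets $\bd K$; whose edges are the maximal connected Voronoi edges together with the arcs into which the points of type (ii)--(iii) cut $\bd K$ (each such arc counted as a single edge even where $K$ bends, so the $m$ vertices of $K$ are not vertices of $G$); and whose faces are the $n$ cells plus the outer face, so $|F|=n+1$. Let $V_1$ and $E_1$ be the numbers of interior Voronoi vertices and of Voronoi edges, and observe that the number of boundary arcs equals the number of type-(ii)--(iii) vertices; substituting into $|V|-|E|+|F|\ge 2$ makes these boundary terms cancel and yields $E_1\le V_1+n-1$. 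On the degree side, every interior Voronoi vertex has degree at least $3$ (in general position exactly $3$: no four sites are equidistant from a common point, so such a vertex lies on exactly three cells and three bisectors), and every other vertex has degree at least $2$; summing degrees gives $3V_1\le 2E_1$. Combining the two inequalities, $V_1\le 2n-2\le 2n$ and $E_1\le 3n-3\le 3n$ (the case $n=1$ being trivial), which are the asserted bounds on Voronoi vertices and edges.

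Each Voronoi edge is a sub-arc of a single $(p,p')$-bisector, so by Lemma~\ref{lem:bisector-complexity} it comprises at most $4m$ bisector segments; hence the subdivision contains $O(E_1 m)=O(mn)$ bisector segments and breakpoints, which together with the $O(n)$ Voronoi vertices and the $O(m+n)$ pieces of $\bd K$ gives total complexity $O(mn)$. Since each Voronoi edge bounds exactly two cells, $\sum_{p\in S}(\text{number of Voronoi edges on }\bd V(p))=2E_1\le 6n$, so the average over the $n$ cells is $O(1)$; multiplying through by the per-edge bound, the average number of bisector segments per cell is at most $8mE_1/n=O(m)$. The one genuinely nonroutine point is the assertion underpinning everything---that $\Vor_K(S)$ really is a cellular planar subdivision: that each cell is a topological disk (this is where Lemma~\ref{lem:star-shaped} and general position are used, to rule out holes and lower-dimensional appendages) and that interior Voronoi vertices have degree exactly three (again general position, to exclude four sites equidistant from a point and a degenerate bisector crossing). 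Everything else is bookkeeping with Euler's formula.
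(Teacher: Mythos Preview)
Your proposal is correct and follows exactly the approach the paper indicates: the paper does not give a detailed proof but simply remarks that the bounds follow ``by a straightforward application of Euler's formula'' together with Lemma~\ref{lem:bisector-complexity}. Your write-up supplies precisely those details (the planar-graph setup with boundary vertices that cancel, the degree-three lower bound at interior Voronoi vertices, and the averaging argument), so there is nothing to add or correct.
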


The following lemma shows that the bound on the combinatorial complexity is tight in the worst case.      

\begin{lemma} \label{lem:lower-bound}
There exists a convex polygon $K$ with $m$ sides and a set $S$ of $n$ sites within $K$ such that $\Vor_K(S)$ has combinatorial complexity $\Omega(m n)$.
\end{lemma}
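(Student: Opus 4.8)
The plan is to produce a single instance in which $\Omega(n)$ of the $O(n)$ Voronoi edges guaranteed by Lemma~\ref{lem:total-complexity} each attain the maximum complexity allowed by Lemma~\ref{lem:bisector-complexity}, namely $\Omega(m)$ bisector segments apiece. For the construction, let $K$ be a ``tall and thin'' convex $m$-gon that is symmetric about both coordinate axes: its height is $\Theta(1)$, its width is a small constant $\delta$, and its $m$ vertices consist of half of them forming a tiny convex arc just below the top apex $(0,1)$ and the other half forming a tiny convex arc just above the bottom apex $(0,-1)$, so that the left and right sides of $K$ are single nearly-vertical edges. Place the $n$ sites $p_1,\dots,p_n$ on the $y$-axis near the center of $K$, at heights $\epsilon,2\epsilon,\dots,n\epsilon$ where $\epsilon$ is chosen sufficiently small relative to $\delta$ and $1/n$, and then perturb $K$ and $S$ by an arbitrarily small amount into general position. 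The claim is that for each $j$ the Voronoi edge shared by $p_j$ and $p_{j+1}$ carries $\Omega(m)$ bisector segments; summing over the $n-1$ consecutive pairs yields total complexity $\Omega(mn)$.

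The first point to establish is that the $(p_j,p_{j+1})$-bisector itself is ``fat''. Since $K$ is (nearly) symmetric about the $x$-axis, the infinitesimal Hilbert metric at the center is a norm whose unit ball is a rescaled copy of $K$ and is therefore symmetric about the $x$-axis; hence near the center this bisector is essentially the horizontal line through the midpoint $(0,(j+\tfrac12)\epsilon)$, as $p_j$ and $p_{j+1}$ are reflections of each other across it. On the other hand, every sector chord $\chi(v_i,p_j)$ is nearly vertical, because each vertex $v_i$ of $K$ lies near one of the two apexes, which are almost directly above and below $p_j$. Consequently the (nearly horizontal) bisector crosses all $m$ of these chords inside a small neighborhood of $p_j$, and as $z$ moves along the bisector the two endpoints of $\chi(z,p_j)$ sweep monotonically around $\bd K$, passing in turn all the top vertices and all the bottom vertices of $K$. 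By the discussion of sectors and bisector segments in Sections~\ref{sec:bisector} and~\ref{sec:complexity}, the pair of edges of $K$ defining the current bisector segment then runs through $\Omega(m)$ distinct values, so the $(p_j,p_{j+1})$-bisector consists of $\Omega(m)$ bisector segments.

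Next one must check that the relevant portion of each such bisector genuinely appears as a Voronoi edge of $\Vor_K(S)$. Because all sites lie within an $O(n\epsilon)$-neighborhood of the center, throughout that neighborhood the Hilbert metric differs from the norm $\|\cdot\|$ above only by a factor $1+O(n\epsilon)$, i.e., by an additive error $o(\epsilon)$. In the norm $\|\cdot\|$ one verifies directly that, for every $j$, each point of the horizontal midline $y=(j+\tfrac12)\epsilon$ lying in $K$ is equidistant from $p_j$ and $p_{j+1}$ and strictly farther---by $\Omega(\epsilon)$---from every other site; this relies only on the vertical slices of $K$ being symmetric intervals, which makes $\|(x,\cdot)\|$ nondecreasing in the magnitude of its second argument. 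Since this $\Omega(\epsilon)$ separation swamps the $o(\epsilon)$ error, the same holds for the true Hilbert metric on the tiny sub-segment of the midline where the $m$ chord crossings occur, so $p_j$ and $p_{j+1}$ are Voronoi neighbors and their Voronoi edge contains that sub-segment. Therefore each of the $n-1$ consecutive Voronoi edges carries $\Omega(m)$ bisector segments, giving combinatorial complexity $\Omega(mn)$, matching the upper bound of Lemma~\ref{lem:total-complexity}.

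The step I expect to be the main obstacle is exactly this last one: pinning down the interplay of scales so that the $\Omega(m)$ sector-chord crossings of the $(p_j,p_{j+1})$-bisector all lie within the arc of that bisector which is actually a Voronoi edge, rather than being clipped off by some third site. This is what forces the site cluster to be polynomially small compared with $\delta$ and $1/n$, so that every crossing happens well inside the region where the Hilbert metric is indistinguishable (to within $o(\epsilon)$) from a fixed norm and the local Voronoi structure is fully determined. Making these estimates precise, confirming that both endpoints of $\chi(z,p_j)$ do sweep past every top and every bottom vertex while $z$ stays on the Voronoi edge, and establishing the monotonicity of $\|\cdot\|$ along the entire midline (argued straight from the norm, hence valid even near $\bd K$ where the linearization no longer applies) together constitute the bulk of the work; the construction's symmetry and the generic perturbation are what keep all of these estimates robust.
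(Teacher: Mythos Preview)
Your construction is correct and follows essentially the same idea as the paper's: place the sites collinearly near the center so that consecutive sites are Voronoi neighbors, and arrange the extra vertices of $K$ so that each such bisector is transverse to $\Omega(m)$ sector-boundary chords. The paper's version is a $90^\circ$ rotation of yours---sites on a short horizontal segment inside a near-rectangle whose $m-4$ extra vertices lie along one long side, so that the (exactly vertical, by the rectangle's reflective symmetry) bisectors cross the nearly horizontal sector chords---which lets it bypass your norm-approximation step, but the mechanism and the counting are identical.
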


\begin{proof}
We start the construction with an axis parallel rectangle $R$ that is slightly taller than wide, and the set $S$ consists of $n$ points positioned on a short horizontal line segment near the center of this rectangle (see Figure~\ref{fig:lower-bound}(a)). (This violates our general position assumptions, but the construction works if we perturb the sides of the rectangle.) Also, create a set of $m-4$ points $V = \{v_1, \ldots, v_{m-4}\}$ along $R$'s left edge. We can adjust the spacing of points of $V$ and $S$ and the side lengths of $R$ so that there exists a diamond shape (shaded in orange in the figure) so that for any point $q$ in this region, the chord $\chi(v_i,q)$ intersects the boundary of $R$ along its vertical sides. Observe that within the diamond, each consecutive pair of sites contributes an edge to the Voronoi diagram, which is easily verified to be a vertical line segment within the diamond.

\begin{figure}[htbp]
    \centerline{\includegraphics[scale=0.35]{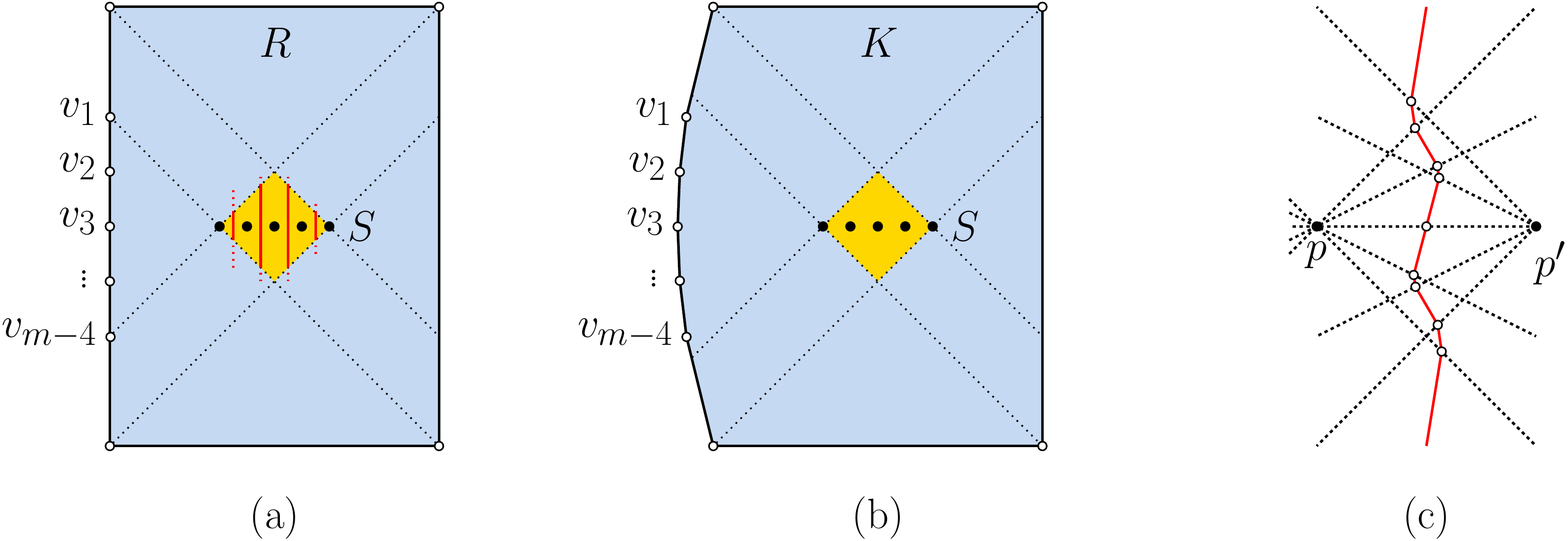}}
    \caption{Proof of Lemma~\ref{lem:lower-bound}.}\label{fig:lower-bound}
\end{figure}

To form $K$, we bend the left side of the rectangle out infinitesimally (see Figure~\ref{fig:lower-bound}(b)) so the points of $V$ become vertices of a convex polygon. The bending does not alter the shape of the Voronoi edges significantly, except to break each edge up into multiple arcs.

We assert that each consecutive pair of points $p, p' \in S$ contributes at least $m-4$ arcs to the Voronoi diagram. To see why, observe that there are $m-4$ sectors about $p$ and $p'$, and vertical line passing between $p$ and $p'$ intersects the boundaries between these sectors (see the red curve in Figure~\ref{fig:lower-bound}(c)). Due to the infinitesimal bending of $R$'s left side, each crossing produces a new segment on the bisector. Since there are $n-1$ consecutive pairs of sites, the total complexity of $\Vor_K(S)$ is at least $(m-4)(n-1) = \Omega(m n)$, as desired.

\end{proof}

\subsection{Bisector Segments} \label{sec:segment}

Next, let us consider the exact formulation of each bisector segment. Again, let $p$ and $p'$ be sites, let $z$ denote some point on the $(p,p')$-bisector, and let $e$ and $f$ (resp., $e'$ and $f'$) be the edges of $K$ bounding $\chi(z,p)$ (resp., $\chi(z,p')$). Let us label these so they encounter the chord in the order $\ang{e, z, p, f}$ (resp., $\ang{e', z, p', f'}$). Describing the bisector is messy, given that it depends on the coordinates/coefficients of $p$, $p'$, $e$, $f$, $e'$, and $f'$, for a total of $12$ parameters! To simplify matters, we reduce the problem to a simpler 1-dimensional problem. 

Consider the point $a$ where the extensions of the edges $e$ and $f$ meet, and define $a'$ analogously for $e'$ and $f'$ (see Figure~\ref{fig:segment-bisector}(a)). If $a = a'$ (implying that $e = e'$ and $f = f'$), then it is easy to see that the bisector degenerates to a straight line passing through $a$. Henceforth, let us assume that $a \neq a'$. 

\begin{figure}[htbp]
    \centerline{\includegraphics[scale=0.35]{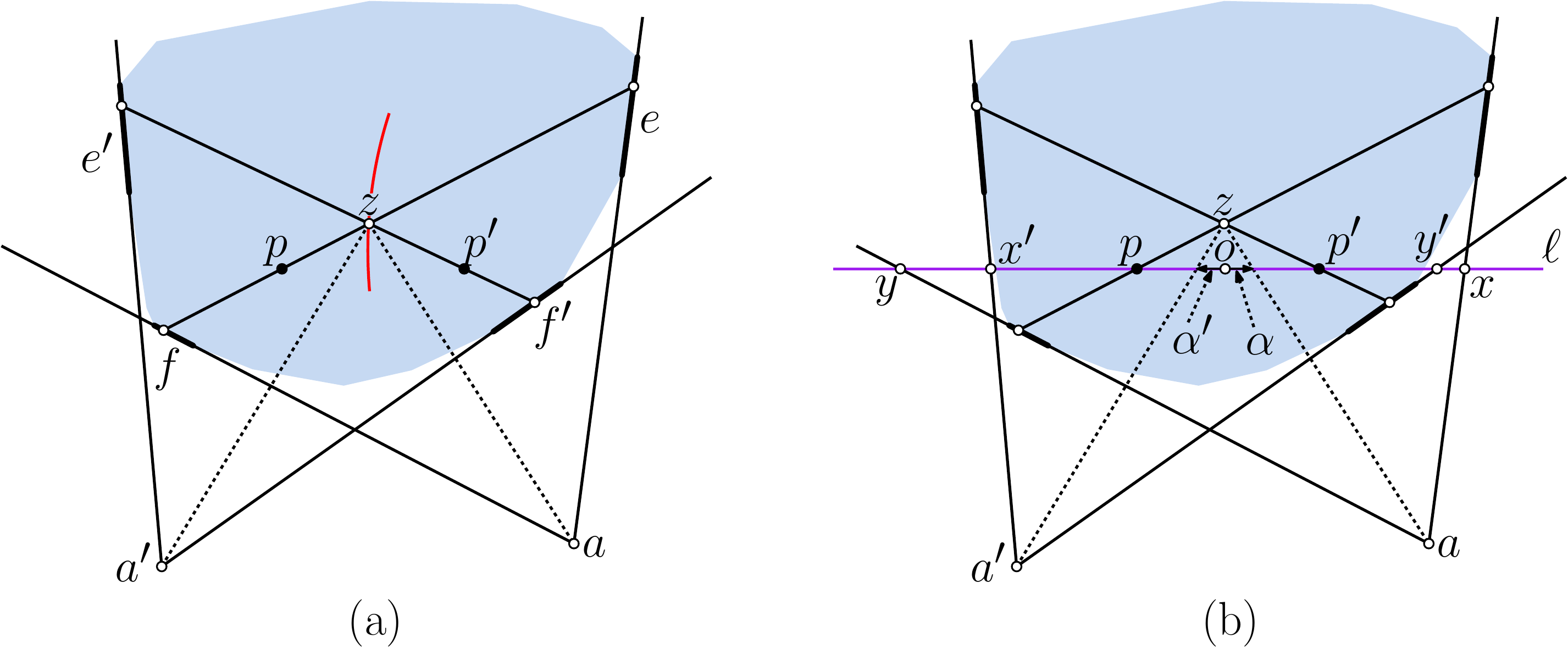}}
    \caption{Parameterizing a segment of the $(p,p')$-bisector.} \label{fig:segment-bisector}
\end{figure}

Because $z$ lies on the bisector, the cross ratio of $z$ and $p$ with respect to its chord equals the cross ratio of $z'$ and $p'$ with respect to its chord. Consider the line $\ell$ passing through $p$ and $p'$. Let $x$ and $y$ be the points lying on the intersection of $\ell$ where the extension of the edge $e$ and $f$ intersects $\ell$, respectively. Define $x'$, and $y'$ analogously for the edges $e'$ and $f'$ (see Figure~\ref{fig:segment-bisector}(b)). (The figure suggests that $p$ and $p'$ lie between the pairs $(x,y)$ and $(x',y')$, but this generally need not be the case. Irrespective of the order, the projection preserves cross ratios and hence preserves Hilbert distances.) Let $o$ be the point along $\ell$ such that $(o,p; y,x) = (o,p'; y',x')$. By consideration of Hilbert distances and the intermediate value theorem, it follows that such a point exists within the line segment $p p'$. Its uniqueness follows from simple algebraic properties of cross ratios.

Now that all the points are on a common line, we can think of these quantities as scalars. Define $\alpha$ and $\alpha'$ such that $o+\alpha$ is the point along $\ell$ where the line extending $z a$ intersects $\ell$, and $o+\alpha'$ is the point where the line extending $z a'$ intersects $\ell$. By basic projective geometry and the fact that $z$ is on the bisector, we have $(o+\alpha,p; y,x) = (o+\alpha',p'; y',x')$. Observe that any point on the bisector can be described by a pair $(\alpha,\alpha')$, by computing the associated points on $\ell$, shooting rays from $a$ and $a'$ through these respective points, and computing their intersection point. The following lemma shows that the reciprocals of $\alpha$ and $\alpha'$ have a simple linear relationship. The proof is rather technical, and appears in the appendix.

\begin{restatable}{lemma}{LemSB}
\label{lem:segment-bisector}
Given a convex polygon $K \in \RE^2$, sites $p, p' \in \interior(K)$, and a segment of the $(p,p')$-bisector determined by edges $e$, $f$, $e'$ and $f'$ as described above. For any point $z$ on the bisector, let $o+\alpha$ and $o+\alpha'$ denote the respective intersection points of the lines $z a$ and $z a'$ with the line $\ell$ through $p p'$. Then $1/\alpha$ and $1/\alpha'$ are related by a linear function depending on the above quantities.
\end{restatable}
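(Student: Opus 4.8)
The plan is to reduce everything to cross-ratio identities on the single line $\ell$ through $p$ and $p'$, exactly as the setup suggests, and then show that the resulting constraint between the two parameters is linear after taking reciprocals. I would begin by fixing an affine coordinate on $\ell$ with the point $o$ as origin, so that $p$, $p'$, $x$, $y$, $x'$, $y'$ all become known scalars, and the points where $za$ and $za'$ meet $\ell$ become $\alpha$ and $\alpha'$ respectively (measured from $o$). Since $z$, $a$, and the point $o+\alpha$ are collinear, and likewise $z$, $a'$, and $o+\alpha'$, the point $z$ is recovered as the intersection of line $a(o+\alpha)$ with line $a'(o+\alpha')$; this already encodes one relation, but the governing equation we actually want is the bisector condition.

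Next I would write down the bisector condition as the cross-ratio equality
\[
    (o+\alpha,\,p;\,y,\,x) ~=~ (o+\alpha',\,p';\,y',\,x').
\]
Expanding each cross ratio as a ratio of differences of the scalar coordinates, the left side is a Möbius (fractional-linear) function of $\alpha$ and the right side is a Möbius function of $\alpha'$, with coefficients built from the fixed scalars $p,p',x,y,x',y'$. A Möbius function of $\alpha$ is an affine function of $1/(\alpha - c)$ for an appropriate shift, and here the natural normalization provided by the choice of $o$ (namely that $\alpha = 0$ and $\alpha' = 0$ both correspond to $z = o$, where the two cross ratios are equal by definition of $o$) forces the relation to pass through the origin in the $(1/\alpha, 1/\alpha')$ plane. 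Concretely, I would massage the cross-ratio equality into the form $A/\alpha + B = A'/\alpha' + B'$ for constants $A,B,A',B'$ depending only on the twelve input parameters (equivalently, on the six scalars on $\ell$ together with $a$ and $a'$ projected onto $\ell$), which is precisely the asserted linear relationship $1/\alpha' = (A/A')\,(1/\alpha) + (B - B')/A'$.

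The main obstacle, and the reason the paper defers this to the appendix, is bookkeeping: one must verify that after projecting $a$ and $a'$ onto $\ell$ (or, more carefully, after expressing "the line $za$ meets $\ell$ at $o+\alpha$" in coordinates), the dependence on $z$ genuinely collapses so that $\alpha$ and $\alpha'$ are tied only through the one cross-ratio equation, with no residual second constraint that would overdetermine the curve. I would handle this by parmeterizing the bisector by $\alpha$ alone: given $\alpha$, the cross-ratio equation determines $\alpha'$, and then $z$ is the unique intersection of $a(o+\alpha)$ and $a'(o+\alpha')$ — one must check this intersection is consistent (nonempty and unique) for generic configurations, which follows because $a \neq a'$ and the two lines are not parallel in general position. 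The only place a genuine case distinction arises is the degenerate $a = a'$ already excluded in the text, and the sub-case where $o+\alpha$ or $o+\alpha'$ escapes to infinity, which is handled by the usual projective convention (a zero reciprocal), consistent with the linear form. Once the linear relation between $1/\alpha$ and $1/\alpha'$ is in hand, substituting back gives a rational parametrization of the bisector segment of bounded degree, matching the claim used earlier in Section~\ref{sec:complexity}.
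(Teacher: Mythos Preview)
Your proposal is correct and follows essentially the same route as the paper's appendix proof: expand the cross-ratio equality $(o+\alpha,p;y,x)=(o+\alpha',p';y',x')$ into a bilinear relation $A\alpha\alpha'+B\alpha+C\alpha'+D=0$, then invoke the defining property of $o$ to kill the constant term $D$, so that dividing by $\alpha\alpha'$ yields a linear relation between $1/\alpha$ and $1/\alpha'$. The paper carries out this computation explicitly, while you phrase it more conceptually via M\"obius maps, but the key step---that the choice of $o$ forces $D=0$---is identical.

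One slip worth correcting: the fact that $\alpha=\alpha'=0$ corresponds to $z=o$ means the relation passes through the origin in the $(\alpha,\alpha')$ plane, not in the $(1/\alpha,1/\alpha')$ plane as you wrote; in the reciprocal coordinates the line is $A + C/\alpha + B/\alpha' = 0$, which generically does \emph{not} pass through the origin. Your worry about a ``residual second constraint'' is also unnecessary: the map $z\mapsto(\alpha,\alpha')$ is a local diffeomorphism from the plane to the plane (its inverse is intersection of the two pencils through $a$ and $a'$), so the single cross-ratio equation is exactly one constraint on two parameters, cutting out a curve as expected.
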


This representation allows us to perform the essential operation needed by our Voronoi diagram algorithms of tracing a bisector segment. We can use one of the values $\alpha$ or $\alpha'$ as the principal parameter of the curve, and then use the linear relation to ascertain the value of the other parameter. We can then solve for the events of consequence, such as the end of the segment (when one of the chords through the bisector point encounters a new edge of $K$) or the location of a Voronoi vertex, where two bisectors meet. 

\section{Randomized Incremental Algorithm} \label{sec:randomized}

In this section we present a randomized incremental algorithm to compute the Hilbert Voronoi diagram for a set of $n$ sites contained in a $m$-sided convex polygon $K$. Our algorithm follows the structure of other incremental algorithms for Voronoi diagrams (see, e.g.,~\cite{guibas1992randomized,klein1993randomized}), and has the added benefit of generating a point-location data structure for the final diagram. This data structure locates the closest site to a point $q \in K$ in expected query time $O((\log n) (\log m n))$ (where the expectation is over the random choices made in the construction).

The sites are randomly permuted, and then inserted one by one into the diagram. The basis case is trivial, since the Voronoi cell of a single site is the entire body. To facilitate point location, we augment the diagram through the addition of line segment \emph{spokes} about each site. Recalling that each Voronoi cell is star shaped, about each site we add line segments connecting it to the boundary vertices of this cell, including its Voronoi vertices, the joints between consecutive bisector segments along each Voronoi edge, and the intersection of a Voronoi edge and the boundary of $K$, and the vertices of $K$ that lie within the Voronoi cell. This produces a (topological) triangulation of the diagram whose size is proportional to the size of the diagram itself. It may be stored in any standard data structure for planar subdivisions, like a doubly connected edge list~\cite{deberg2010book}. 

Locating a point in the current diagram is performed through the use of a \emph{history DAG}, that is, a directed acyclic graph whose terminal nodes correspond to triangles in the current diagram. The structure encodes the history of changes to the diagram. The addition of a new site will result in the creation of a number of new triangles, and the destruction of old ones. Each old triangle will store a reference to one of the new triangles that replaced it (as described in greater detail below).

The insertion process works as follows. When a new site $p_i$ is inserted, we first employ a search through the history DAG to determine the triangle of the current diagram containing it (see Figure~\ref{fig:insertion}(a)). From this triangle, we know the closest site in the current diagram, say $p_j$. We find the midpoint (in the Hilbert sense) on the line segment between $p_i$ and $p_j$. Starting at this point, we proceed to trace the boundary of the Voronoi cell of $p_i$ in the new diagram in counterclockwise order about $p_i$ (see Figure~\ref{fig:insertion}(b)). There are two cases that arise:
\begin{description}
\item[Bisector Trace:] We are tracing the bisector between $p_i$ and some existing site $p_k$ (see Figure~\ref{fig:insertion}(c)). We consider the sectors of $p_i$ and $p_k$ containing the current portion of the $(p_i,p_k)$-bisector. By applying the parameterization from Lemma~\ref{lem:segment-bisector}, we can trace the bisector until (1) we encounter the boundary of either sector, (2) we encounter the bisector between $p_k$ and another site, or (3) we encounter the boundary of $K$. 

In the first case, we create a new segment vertex here, add spokes to this vertex, erase the extension of the sector edge (shown as a broken line in Figure~\ref{fig:insertion}(c)) and continue the tracing in the new sector. In the second case, we create a new Voronoi vertex, add spokes to this vertex from its three defining sites, and continue the trace along the new bisector. In the third case, we insert two spokes joining the point where the bisector encounters the boundary to $p_i$ and $p_k$, respectively. We then transition to the boundary trace described next.

\item[Boundary Trace:] We are tracing the Voronoi cell of $p_i$ along $K$'s boundary. We walk along the boundary of $K$ in counterclockwise order, considering each consecutive sector of the closest site $p_k$, prior to $p_i$'s insertion. By applying the parameterization from Lemma~\ref{lem:segment-bisector}, we determine whether the $(p_i,p_k)$-bisector intersects $K$'s boundary within the intersection of the current pair of sectors. If so, we identify this point on the $\bd K$, add spokes to each of $p_i$ and $p_k$, and then resume tracing along the $(p_i,p_k)$-bisector. Otherwise, we encounter one of the two sector boundaries, and we continue the tracing the next sector.
\end{description} 

 Along the way, we erase spokes from the current diagram, and/or introduce new spokes connected to $p_i$. When we return to our starting point, the insertion is completed.

\begin{figure}[htbp]
    \centerline{\includegraphics[scale=0.35]{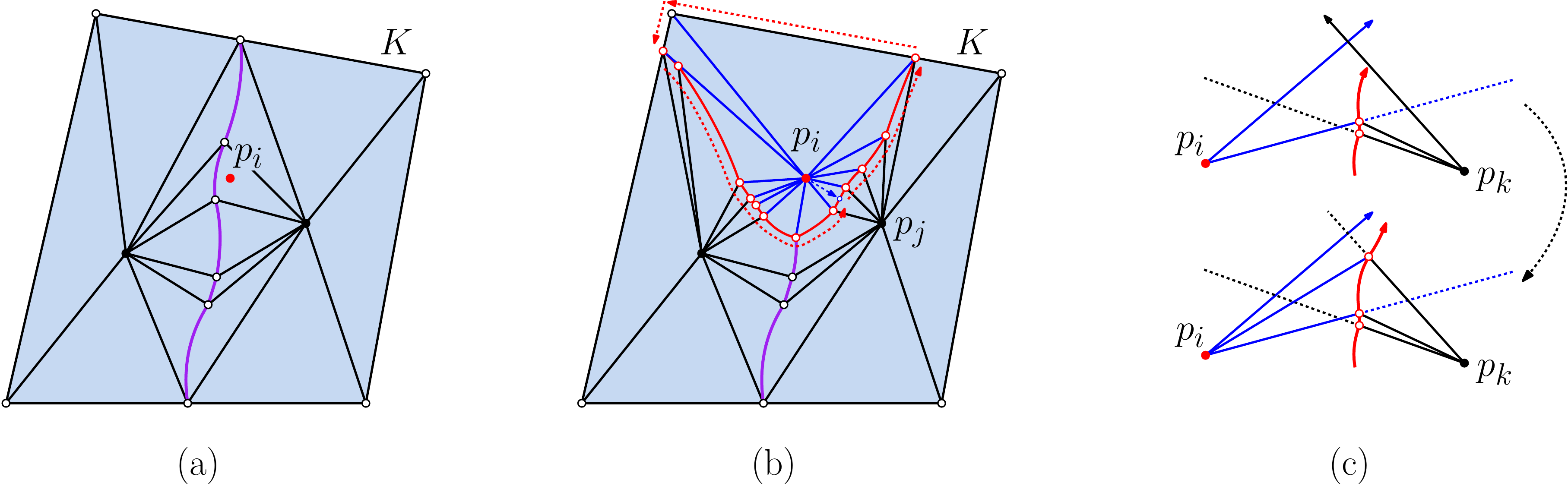}}
    \caption{Inserting a new site into the augmented diagram.} \label{fig:insertion}
\end{figure}

Point location in the history DAG is performed as follows for a query point $q$. The root node is associated with the entire polygon $K$. Each time a triangle is destroyed by the insertion of some point, we store a link to the site whose insertion caused the destruction (the root points to the first site inserted). We proceed as in Guibas, Knuth, and Sharir to determine the actual triangle containing the point in $O(\log m n)$ time as follows~\cite{guibas1992randomized}. Let $p_i$ denote the inserted site that caused $q$'s triangle to change, and let $p_j$ denote the site containing $q$ prior this event. Either $p_j$ is still $q$'s closest site or it is changed to $p_i$. We first compare the distance from $q$ to each of these sites. Once we have done this, we perform a radial binary search around this site to determine the triangle containing $q$. This radial search can be done in $O(\log m n)$ time. (As we shall see below, while the expected number of structural changes per insertion is $O(1)$ in expectation, we cannot infer that this is the case for an arbitrary query point. Also, we need to determine the triangle containing $q$ in order to bound the number of length of the search path for $q$ in the history DAG.) This completes the description of the algorithm. What remains is to analyze the algorithm's expected case running time.

\begin{lemma} \label{lem:exp-changes}
Assuming that sites are inserted in random order, the expected number of structural changes to the diagram over the course of the algorithm is $O(m n)$.
\end{lemma}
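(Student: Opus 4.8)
The plan is to use the standard backwards-analysis framework for randomized incremental constructions, as in Guibas, Knuth, and Sharir~\cite{guibas1992randomized} and Klein~\etal~\cite{klein1993randomized}, but with care taken for the fact that each Voronoi edge is subdivided into up to $4m$ bisector segments rather than being a single arc. I would first argue that, conditioned on the set $S_i = \{p_1,\dots,p_i\}$ of the first $i$ inserted sites, the expected structural work performed when $p_i$ is inserted equals (up to constants) the expected complexity of the Voronoi cell $V_{S_i}(p_i)$ of the last-inserted site. Since the random insertion order makes $p_i$ a uniformly random element of $S_i$, this expected cell complexity is $\frac{1}{i}$ times the total complexity of $\Vor_{S_i}(S_i)$. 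By Lemma~\ref{lem:total-complexity}, this total complexity is $O(m i)$ (in fact $O(i)$ Voronoi edges and vertices, each edge carrying $O(m)$ bisector segments), so the expected work at step $i$ is $O(m i)/i = O(m)$. Summing over $i = 1,\dots,n$ gives the claimed $O(mn)$ bound.

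The main subtlety — and the step I expect to be the real obstacle — is justifying the first claim: that the work to insert $p_i$ is proportional to the size of $p_i$'s cell \emph{in the new diagram} $\Vor_{S_i}(S_i)$, and not something larger. The tracing procedure walks the boundary of $V_{S_i}(p_i)$; each elementary step either creates a new segment vertex, a new Voronoi vertex, or crosses a sector boundary of the neighboring site $p_k$. Creations of new features are clearly charged to features of $V_{S_i}(p_i)$, of which there are $O(m \cdot \deg_i(p_i))$ where $\deg_i(p_i)$ is the number of Voronoi neighbors of $p_i$; summing $\deg_i(p_i)$ over a random $p_i$ gives $O(1)$ by Lemma~\ref{lem:total-complexity}. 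The sector-boundary crossings of $p_k$ also number $O(m)$ per neighbor $p_k$, since along a single Voronoi edge we traverse each of $p_k$'s $\le 2m$ sectors at most once by star-shapedness (Lemma~\ref{lem:star-shaped}) — this is exactly the bisector-segment count of Lemma~\ref{lem:bisector-complexity}. So the total tracing work at step $i$ is $O(m)$ times the number of Voronoi edges incident to $p_i$ in $\Vor_{S_i}(S_i)$, which backwards analysis bounds in expectation by $O(m) \cdot O(1) = O(m)$. The erasure of old spokes and the destruction of old triangles is paid for by their creation in earlier steps, so it does not change the asymptotics.

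Finally I would note that the structural changes counted here — creations and destructions of segment vertices, Voronoi vertices, Voronoi edges, spokes, and triangles — are exactly what the history DAG records, so the same $O(mn)$ bound governs the total size of the history DAG and hence the space used by the construction; the running-time bound with its extra logarithmic factor will be handled separately by accounting for point-location searches. One should also double-check the boundary-trace case: tracing $V_{S_i}(p_i)$ along $\bd K$ visits each of $p_k$'s sectors at most once per boundary arc for the same star-shapedness reason, so it contributes only $O(m)$ per affected neighbor and is absorbed into the same bound.
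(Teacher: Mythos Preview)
Your proposal is correct and follows essentially the same backwards-analysis approach as the paper: the expected structural change at step $i$ is $O(m)$ because $p_i$ is a uniformly random element of $S_i$ and Lemma~\ref{lem:total-complexity} bounds the average cell complexity by $O(m)$, then sum over $i$ to obtain $O(mn)$. You are in fact more explicit than the paper's proof about two points it leaves implicit---why the tracing cost is governed by the new cell's complexity (via star-shapedness and Lemma~\ref{lem:bisector-complexity}), and why destroyed spokes and triangles can be amortized against their earlier creation.
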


\begin{proof}
We employ a standard backwards analysis. Consider the final diagram consisting of $n$ sites. Because the sites have been inserted in random order, each site is equally likely to have been the last to be inserted. By Lemma~\ref{lem:total-complexity}, the average number of Voronoi edges on any Voronoi cell is $O(1)$ and the average number of bisector segments on each Voronoi edge is $O(m)$. 

The number of structural changes incurred by the insertion of a site is proportional to its number of Voronoi edges in the final diagram times the number of bisector segments on each of these edges (plus at most $m$ in case the Voronoi cell contains portions of the boundary of $K$). Therefore, the expected number of structural changes due the last insertion is $O(m)$. Since this holds irrespective of $n$, it follows that the total number of structural changes over all $n$ insertions is $O(m n)$.
\end{proof}

Next, we consider expected time needed to perform point location.

\begin{lemma} \label{lem:exp-search}
Let $q$ be any fixed point of the plane. If the point-location structure is constructed by inserting the sites in random order, then the expected cost of searching to locate the radial triangle containing $q$ in the final Voronoi diagram is $O((\log n)(\log m n))$.
\end{lemma}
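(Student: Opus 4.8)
The plan is to follow the standard backwards-analysis / Clarkson–Shor–style argument used for history DAGs in randomized incremental Voronoi constructions (as in Guibas, Knuth, and Sharir~\cite{guibas1992randomized}), but with the twist that a single ``node'' of the diagram is now a radial triangle, of which there are $\Theta(mn)$ rather than $\Theta(n)$. First I would fix the query point $q$ and track its search path in the history DAG. The path visits a sequence of triangles $\tau_1, \tau_2, \ldots$, where $\tau_i$ is the triangle containing $q$ after the $i$-th insertion that changed $q$'s triangle; the search cost at each such step is the $O(\log mn)$ radial binary search already described plus the $O(1)$ distance comparison, so the total cost is $O(\log mn)$ times the \emph{length} of this path, i.e.\ the number of distinct triangles containing $q$ over the course of the algorithm. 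Hence it suffices to show that the expected number of such triangles is $O(\log n)$.

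The key step is a backwards-analysis bound on the probability that $q$'s triangle changes at step $i$. Consider the subset $S_i \subseteq S$ of the first $i$ inserted sites; since the insertion order is a uniform random permutation, $S_i$ is a uniform random $i$-subset and the $i$-th inserted site is uniform within $S_i$. The triangle of $q$ in the diagram of $S_i$ changes upon the $i$-th insertion only if the last-inserted site is one of the $O(1)$ sites whose Voronoi cells are ``responsible'' for $q$'s radial triangle --- concretely, the closest site $p$ to $q$ in $\Vor_K(S_i)$, together with the (at most two) sites defining the Voronoi edges bounding the radial triangle containing $q$, and possibly the edge of $K$ (which contributes nothing to the probability since $K$ is present from the start). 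This is exactly the point where I must be careful: unlike the classical setting, the radial triangle containing $q$ is bounded not only by Voronoi edges but also by spoke segments and by sector boundaries of $p$, and the sector boundaries depend only on $p$ and on $K$, not on the other sites. So the set of sites that, if removed last, could have changed $q$'s radial triangle is still $O(1)$ in size --- it is $p$ plus the $O(1)$ neighboring sites across the bounding Voronoi edges. Therefore $\Pr[q\text{'s triangle changes at step }i] = O(1/i)$, and summing over $i = 1, \ldots, n$ gives an expected path length of $O(H_n) = O(\log n)$.

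Combining the two bounds, the expected search cost is $O(\log n)\cdot O(\log mn) = O((\log n)(\log mn))$, as claimed. I would also remark, as the excerpt itself flags, that one cannot simply say ``the expected number of structural changes per insertion is $O(1)$, therefore the search path is short'': the structural changes counted in Lemma~\ref{lem:exp-changes} are amortized over all of $K$, whereas here we need a bound \emph{localized at $q$}, and the localized bound is what the $O(1/i)$ hitting-probability argument supplies. The one genuine subtlety --- the main obstacle --- is verifying that the radial triangle of $q$ in $\Vor_K(S_i)$ is determined (in the sense relevant to backwards analysis) by only $O(1)$ of the sites in $S_i$, despite a Voronoi cell having up to $\Theta(m)$ spokes: the resolution is that $q$'s radial triangle is pinned down by $q$'s closest site and the identity of the two Voronoi edges (hence two neighboring sites) straddling the angular wedge at that site containing $q$, everything else (spokes, sector walls, bisector breakpoints) being a deterministic function of those $O(1)$ sites and $K$.
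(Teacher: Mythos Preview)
Your proposal is correct and follows essentially the same backwards-analysis argument as the paper. The only minor difference is in the accounting of how many sites determine $q$'s radial triangle: the paper pins it down as at most four (the apex site $p_1$, the neighbor $p_2$ across the base bisector, and up to two further sites $p_3,p_4$ if the base endpoints happen to be Voronoi vertices), whereas your description in terms of ``two Voronoi edges straddling the angular wedge'' is slightly looser---but since either way the count is $O(1)$, the $O(1/i)$ bound and the rest of the argument go through unchanged.
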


\begin{proof}
The triangle containing $q$ depends on at most four sites. (Each triangle depends on the site $p_1$ where its two spokes meet. If the base of the triangle is a bisector, it also depends on the site $p_2$ on the other side of the bisector. Finally, either or both of remaining two vertices may be a Voronoi vertices, between $p_1$, $p_2$, and two other sites $p_3$ and $p_4$, respectively.) So, the probability that $q$ changes triangle membership during $i$th insertion is at most $\frac{4}{i}$. The total number of times $q$ descends to a new triangle in the history DAG in expectation is $\sum_{i=1}^n \frac{4}{i} = O(\log n)$. Each time such a descent occurs, we perform a radial search (described above) taking time $O(\log m n)$. Therefore, the total expected search time is $O((\log n) (\log m n))$.
\end{proof}

Combining the total expected update time and the expected search time for each of the $n$ sites, we have:

\begin{theorem} \label{thm:rand-inc-time}
Given an $m$-sided convex polygon $K$ in $\RE^2$ and a set of $n$ sites $S$ in $K$, the randomized incremental algorithm computes $\Vor_K(S)$ in expected time $O(n m + n(\log n) (\log m n))$ (where the expectation is over all possible insertion orders).
\end{theorem}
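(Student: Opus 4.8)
The plan is to bound, for each insertion, the two kinds of work the algorithm performs---(i) locating the new site in the current history DAG, and (ii) tracing the boundary of its new Voronoi cell and carrying out the resulting structural updates---and then sum over all $n$ insertions (the base case of a single site being trivial). Work of type (ii) is, up to constant factors, already accounted for by Lemma~\ref{lem:exp-changes}. Every elementary step of a bisector trace ends at one of a bounded number of events---the end of a bisector segment, a Voronoi vertex, or a crossing with $\bd K$---each of which is a structural change and each of which is detected and processed in $O(1)$ time by solving a bounded-degree system arising from the linear reciprocal relation of Lemma~\ref{lem:segment-bisector}. The boundary-trace steps that merely advance to the next sector of the current nearest site contribute at most $O(m)$ additional work per insertion, which is precisely the ``$+m$'' term already folded into Lemma~\ref{lem:exp-changes}. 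Creating and deleting segment vertices, Voronoi vertices, and spokes, together with the corresponding updates to the planar subdivision and to the history DAG, are each $O(1)$ operations, with deletions charged against their creations. Hence the expected total work of type (ii) over all $n$ insertions is $O(m n)$ by Lemma~\ref{lem:exp-changes}.

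For work of type (i), I would reuse the backwards analysis behind Lemma~\ref{lem:exp-search}, but with the newly inserted site $p_i$ playing the role of the query point and the partial structure built from $p_1,\dots,p_{i-1}$ in place of the final diagram. The one thing to verify is that $p_i$ may be treated as a point fixed independently of the order of the earlier insertions: conditioning on the unordered set $\{p_1,\dots,p_i\}$ and on which of its elements is $p_i$, the sites $p_1,\dots,p_{i-1}$ are still in uniformly random order, which is exactly the hypothesis the analysis of Lemma~\ref{lem:exp-search} requires. Consequently, the expected number of times the triangle containing $p_i$ changes in the history DAG on the first $i-1$ sites is $\sum_{j=1}^{i-1} 4/j = O(\log i)$, each change triggering a radial search of cost $O(\log m i)$; so the expected search cost of the $i$-th insertion is $O((\log i)(\log m i)) = O((\log n)(\log m n))$, and summing over $i = 1,\dots,n$ gives $O(n(\log n)(\log m n))$.

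Adding the two contributions yields the claimed expected running time $O(n m + n(\log n)(\log m n))$. I expect the only genuinely delicate point to be the decoupling argument in the previous paragraph---justifying that the cost of locating an inserted site can legitimately be analyzed as that of a fixed query point against a random permutation of the earlier sites---together with the routine but necessary check that every trace step is truly an $O(1)$-time operation, which rests entirely on the bounded-degree parameterization supplied by Lemma~\ref{lem:segment-bisector}.
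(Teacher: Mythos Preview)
Your proposal is correct and follows essentially the same approach as the paper: the paper simply states that the theorem follows by ``combining the total expected update time and the expected search time for each of the $n$ sites,'' i.e., by summing Lemma~\ref{lem:exp-changes} and $n$ applications of Lemma~\ref{lem:exp-search}. Your write-up is in fact more careful than the paper's, since you make explicit both the $O(1)$ cost per trace step via Lemma~\ref{lem:segment-bisector} and the conditioning argument that lets Lemma~\ref{lem:exp-search} apply to the inserted site $p_i$ against a uniformly random permutation of $p_1,\dots,p_{i-1}$; the paper leaves both of these implicit.
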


\section{Divide and Conquer Algorithm}

In this section we present an algorithm for constructing the Hilbert Voronoi diagram for a set of $n$ sites contained in a $m$-sided convex polygon $K$. Our algorithm follows the structure of the standard divide and conquer algorithm for the Euclidean Voronoi diagram (see, e.g.,~\cite{preparata1985book}).

The algorithm begins by partitioning the sites about a vertical line $\ell$ into two subsets, $S_L$ and $S_R$, of roughly equal sizes that lie to the left and right of this line, respectively. We recursively compute the Hilbert Voronoi diagrams $\Vor_K(S_L)$ and $\Vor_K(S_R)$ (see Figure~\ref{fig:left-right-bisector}(b) and~(c)). To merge them, we compute the bisector between $S_L$ and $S_R$, which we denote by $B(S_L, S_R)$. This consists of those points that are equidistant (in the Hilbert sense) from their closest sites in $S_L$ and $S_R$. We shall see below that this consists of a collection of curves, called \emph{components}, each being the concatenation of Voronoi edges between two sites, one in $S_L$ and one in $S_R$ (see Figure~\ref{fig:left-right-bisector}(d)). Each component terminates on the boundary $K$ (that is, no component is a closed loop). We will show below that it is possible to construct all the components of $B(S_L,S_R)$ by a single bottom-up traversal, which runs in $O(m n)$ time. Once the bisector is constructed, we complete the merging process, by discarding the portions of $\Vor_K(S_L)$ that lie on the right side of the bisector and the portions of $\Vor_K(S_R)$ that lie on the left side of bisector. Before describing the bisector-construction process, we explain the bisector's structure.

\begin{figure}[htbp]
    \centerline{\includegraphics[scale=0.35]{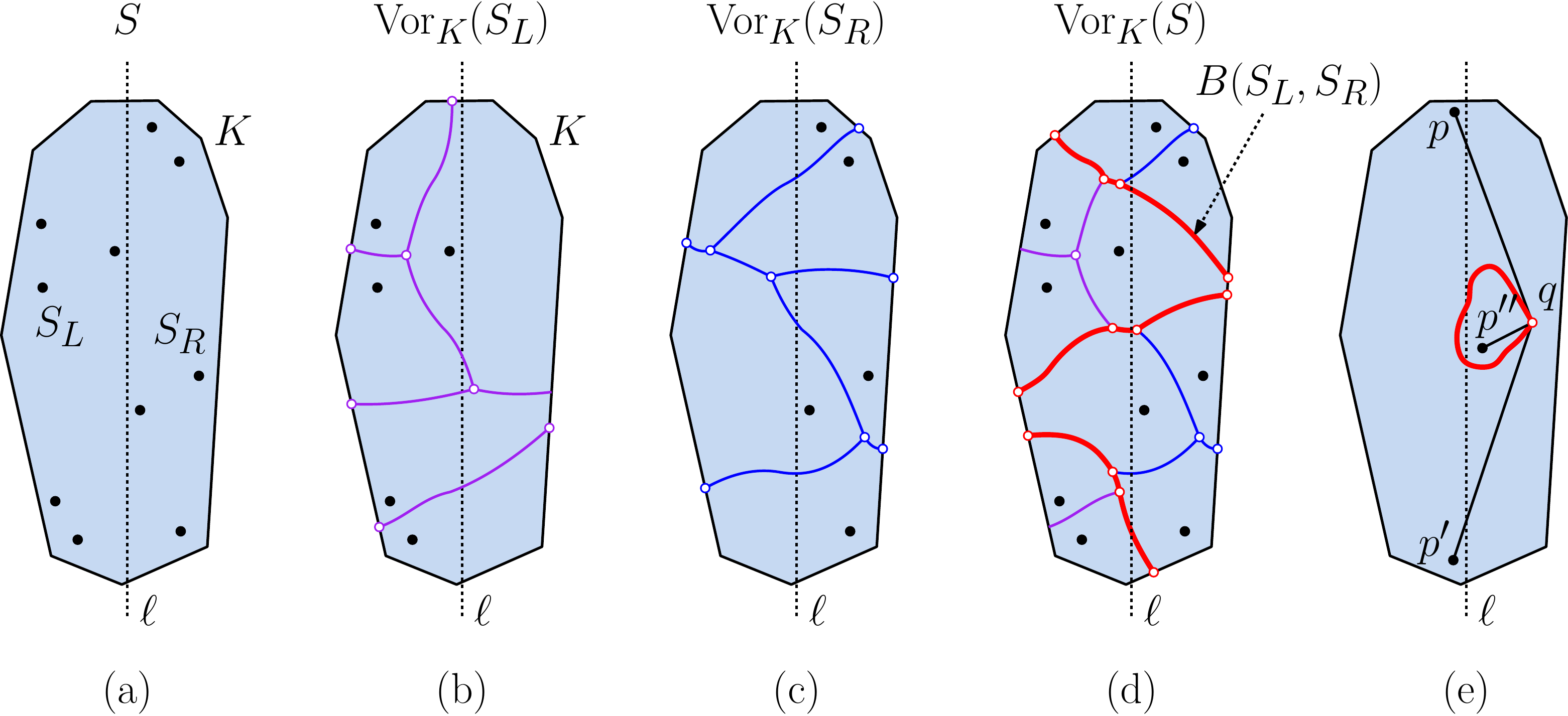}}
    \caption{Divide-and-conquer algorithm, bisectors, and the proof of Lemma~\ref{lem:left-right-bisector}(iii).} \label{fig:left-right-bisector}
\end{figure}

\begin{lemma} \label{lem:left-right-bisector}
Given two sets of sites $S_L$ and $S_R$ separated by a vertical line $\ell$, the Hilbert bisector $B(S_L, S_R)$ consists of a collection of pairwise disjoint curves, called \emph{components}, where:
\begin{enumerate}
\item[$(i)$] Each component is the concatenation of Voronoi edges from the final diagram, each between a site in $S_L$ and another in $S_R$.

\item[$(ii)$] Every point on $B(S_L,S_R)$ is visible to some point on $\ell \cap K$, that is, for each $x \in B(S_L,S_R)$ there exists a point $y \in \ell \cap K$ such that the open line segment $x y$ does not intersect $B(S_L,S_R)$.

\item[$(iii)$] Each component is terminated by endpoints that lie on the boundary of $K$ (and hence no component is a closed loop).

\item[$(iv)$] There are at most $n$ components, and the total combinatorial complexity of $B(S_L,S_R)$ is $O(m n)$.
\end{enumerate}
\end{lemma}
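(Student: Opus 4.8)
The plan is to prove the four parts in order: $(i)$ directly from the definitions, $(ii)$ by a short metric estimate, $(iii)$ as a topological consequence of $(ii)$, and $(iv)$ by a count built on $(i)$--$(iii)$ together with Lemma~\ref{lem:total-complexity}. Write $\mathcal{V}_L = \{x \in K : d(x,S_L) \le d(x,S_R)\}$ and $\mathcal{V}_R = \{x \in K : d(x,S_R) \le d(x,S_L)\}$, so that $\mathcal{V}_L \cup \mathcal{V}_R = K$ and $\mathcal{V}_L \cap \mathcal{V}_R = B(S_L,S_R)$. For $(i)$, fix $x \in B(S_L,S_R)$ and let $p \in S_L$, $q \in S_R$ be its closest sites in the two sets (unique by general position away from Voronoi vertices). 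Then $d(x,p) = d(x,q) = d(x, S_L \cup S_R)$, so $x$ lies on the edge of $\Vor_K(S_L \cup S_R)$ shared by $V(p)$ and $V(q)$. Hence $B(S_L,S_R)$ is exactly the union of those Voronoi edges of the final diagram separating an $S_L$-cell from an $S_R$-cell, each connected component is a concatenation of such edges, and general position (no four sites equidistant from a common point) makes $B(S_L,S_R)$ a $1$-manifold, so each component is a simple arc or a simple closed curve.

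For $(ii)$, let $x \in B(S_L,S_R)$; by left--right symmetry assume $x$ lies on or to the right of $\ell$, and let $p \in S_L$ be the site of $S_L$ closest to $x$. Since $p$ is strictly left of $\ell$, the segment $xp$ meets $\ell$ in a unique point $y$ (with $y = x$ if $x \in \ell$, in which case there is nothing to prove). I claim the open segment $xy$ avoids $B(S_L,S_R)$. Let $z$ be interior to $xy$, hence interior to $xp$; by collinearity of Hilbert distances $d(z,p) = d(x,p) - d(x,z) < d(x,p)$, while for every $q \in S_R$ the triangle inequality gives $d(z,q) \ge d(x,q) - d(x,z) \ge d(x,S_R) - d(x,z) = d(x,p) - d(x,z) = d(z,p)$, using $d(x,S_R) = d(x,p)$ since $x \in B(S_L,S_R)$. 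Thus $d(z,S_L) \le d(z,p) \le d(z,S_R)$, and equality throughout would require a second site of $S_R$ at distance $d(z,p)$ from $z$ together with $z$ lying on a geodesic from $x$ to that site, a coincidence excluded by general position; so $z \notin B(S_L,S_R)$. Taking this $y$ as the witness proves $(ii)$.

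Part $(iii)$ is the core of the lemma and, I expect, the main obstacle. Suppose for contradiction that a component $C$ is a simple closed curve; it bounds an open region $D$ with $\overline{D} \subseteq \interior(K)$. Since the chord $\ell \cap K$ has both endpoints on $\partial K$, it is not contained in $\overline{D}$, so $\ell$ does not lie entirely inside the loop. Using $(i)$ and the connectivity argument, $S_R$ is the closer set along one fixed side of $C$, say the interior side; combining this with the star-shapedness of Voronoi cells (Lemma~\ref{lem:star-shaped}) and the fact that $S_L$ and $S_R$ lie on opposite sides of $\ell$, one checks that $D$ then lies entirely in $\{x : d(x,S_L) > d(x,S_R)\}$, an isolated component of that region whose frontier $C$ lies in $B(S_L,S_R)$. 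The intended contradiction is geometric: choose a point $x^{*}$ of $\overline{D}$ at maximum distance from $\ell$ (more carefully, a point in the relative interior of the face of $\operatorname{conv}(\overline{D})$ farthest from $\ell$, chosen so that $x^{*} \in \overline{D}$, hence $x^{*} \in C$); then any straight segment from $x^{*}$ toward $\ell \cap K$ must first enter $D$ and therefore cross $C \subseteq B(S_L,S_R)$, contradicting $(ii)$. Making this rigorous -- guaranteeing such an $x^{*}$ when $\overline{D}$ is non-convex or when the foot of the perpendicular from $x^{*}$ to $\ell$ falls outside the chord $\ell \cap K$, and separately treating the case where $\ell$ meets $\overline{D}$ -- is the delicate part, and is where I would spend most of the effort.

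Given $(iii)$, part $(iv)$ is a short count. By $(iii)$, $B(S_L,S_R)$ is a disjoint union of $c$ simple arcs, each contractible; $\mathcal{V}_L$ and $\mathcal{V}_R$ are finite unions of Voronoi cells, which are contractible by Lemma~\ref{lem:star-shaped}; and because $B(S_L,S_R)$ has no closed component, neither $\mathcal{V}_L$ nor $\mathcal{V}_R$ has a bounded complementary region inside $K$, so each is homotopy equivalent to a set of isolated points, one per connected component, say $a$ and $b$ of them. Inclusion--exclusion for Euler characteristics, $\chi(K) = \chi(\mathcal{V}_L) + \chi(\mathcal{V}_R) - \chi(\mathcal{V}_L \cap \mathcal{V}_R)$, then gives $1 = a + b - c$, i.e.\ $c = a + b - 1$. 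Since each connected component of $\mathcal{V}_L$ (resp.\ $\mathcal{V}_R$) is a union of whole cells and so contains at least one site, $a + b \le |S_L| + |S_R| = n$, whence $c \le n-1 \le n$. Finally, $B(S_L,S_R)$ is a subcomplex of $\Vor_K(S_L \cup S_R)$, which has combinatorial complexity $O(mn)$ by Lemma~\ref{lem:total-complexity}; therefore the total complexity of $B(S_L,S_R)$ is $O(mn)$, completing the plan.
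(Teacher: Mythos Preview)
Parts~(i) and~(ii) are correct and match the paper in spirit; the paper phrases~(ii) by invoking star-shapedness of $V(p)$ (Lemma~\ref{lem:star-shaped}) directly rather than unpacking the metric inequalities, but the content is the same. Your Euler-characteristic argument for~(iv) is a valid alternative and in fact yields the sharper bound $c\le n-1$; the paper simply observes that each component contains at least one full Voronoi edge of the merged diagram, of which there are $O(n)$.

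The real issue is~(iii). Your plan rests on the claim that \emph{every} segment from the extremal point $x^{*}$ to a point of $\ell\cap K$ must first enter $D$; combined with~(ii) this would force the witness segment to cross $C$, giving the contradiction. But that claim is false, and the witness from~(ii) is itself the counterexample: under your own hypothesis $D\subset\{d(\cdot,S_L)>d(\cdot,S_R)\}$, the open segment from $x^{*}$ toward its nearest $S_L$-site $p$ lies in $V(p)\subseteq\mathcal V_L$ and therefore avoids $D$ entirely. So no contradiction arises along this route, and the difficulty you flagged is not a matter of case analysis but a structural gap in the strategy.

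The paper handles~(iii) by a different mechanism. It also selects the point $q$ of the loop farthest from $\ell$, but then applies~(ii) to points of $C$ infinitesimally above and below $q$: their visibility segments (each toward a nearest $S_L$-site) must pass on opposite sides of the loop near $q$, which forces $q$ to be equidistant from two distinct sites $p,p'\in S_L$ in addition to its nearest site $p''\in S_R$. Hence $p,p',p''$ all lie on the boundary of the Hilbert ball centered at $q$, with $p''$ angularly between $p$ and $p'$ as seen from $q$; since the segment $pp'$ lies left of $\ell$ while $p''$ lies to its right, this configuration contradicts the convexity of Hilbert balls (Lemma~\ref{lem:hilbert-ball}). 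The missing ingredient in your plan is precisely this appeal to Hilbert-ball convexity, which converts the local picture at the extremal point into an immediate contradiction without any entering/exiting analysis of $D$.
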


\begin{proof}
Claim~(i) follows from basic facts about Voronoi diagrams. Every point $q \in B(S_L,S_R)$ is on the Voronoi cell of at least two sites, one on each side of $\ell$. By Lemma~\ref{lem:star-shaped}, the line segment from $q$ to each site lies entirely within the Voronoi cell, and hence does not intersect $B(S_L,S_R)$. One of these two lines must cross $\ell \cap K$, which establishes~(ii).

To prove~(iii), suppose to the contrary that a component does not contact $K$'s boundary, implying that it forms a closed loop. Let $q$ be the point of this loop that is farthest from $\ell$ (that is, has the largest $x$-coordinate). By claim~(ii) and consideration points infinitesimally close on either side of $q$ on the loop, $q$ must be visible to two sites $p, p' \in S_L$ such that the segment $q p$ passes above the loop and $q p'$ passes below the loop (see Figure~\ref{fig:left-right-bisector}(e)). Therefore, $q$ is a Voronoi vertex between $p$, $p'$, and the closest site $p''$ in $S_R$. Thus, $q$ is the center of a Hilbert ball passing through these three sites. However, because the line segment $p p'$ lies entirely to the left of $\ell$ and $p''$ lies between these points to the right of $\ell$, this violates the convexity of Hilbert balls, a contradiction.

Claim~(iv) follows from the fact that each component contains at least one full Voronoi edge from the final diagram, and by Lemma~\ref{lem:total-complexity} there are at most $O(n)$ Voronoi edges in the final diagram. Also, the bisector is part of the final diagram, and hence its total complexity cannot exceed that of the final diagram, which by Lemma~\ref{lem:total-complexity} is $O(m n)$.
\end{proof}

Let us consider how the bisector $B(S_L,S_R)$ is constructed. As in the previous section, we will triangulate each Voronoi cell by adding \emph{spokes}, that is, line segments between each site and the various vertices (Voronoi vertices, bisector segment endpoints, vertices of $K$) that lie on the boundary of its Voronoi cell (see Figure~\ref{fig:left-right-trace}(a)). As mentioned above, we construct the bisector from bottom to top. (Note, however, that the components of $B(S_L,S_R)$ need not be vertically monotone.) We start at the lowest endpoint of $q$ of $\ell \cap K$. Let $p$ and $p'$ be the sites of $S_L$ and $S_R$ that are closest to $q$, respectively. We may assume that the construction algorithms for $\Vor_K(S_L)$ and $\Vor_K(S_R)$ provide us with the identities of $p$ and $p'$ as well as the triangles of their respective Voronoi cells that contain $q$. In $O(1)$ time, we can determine the closer of $p$ or $p'$ to $q$. If it is $p$, we begin a boundary trace counterclockwise about $\bd K$ and if it is $p'$, we begin a boundary trace clockwise about $\bd K$. These boundary traces are described below.

\begin{figure}[htbp]
    \centerline{\includegraphics[scale=0.35]{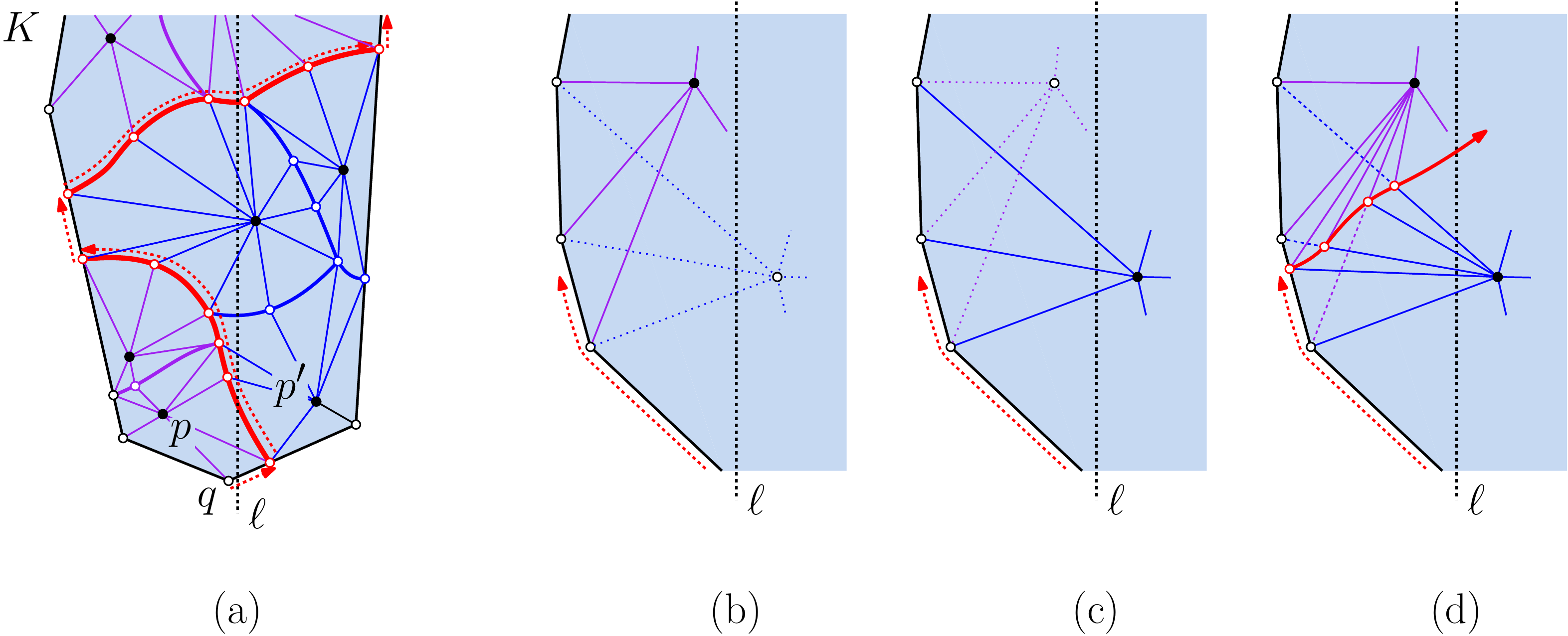}}
    \caption{Tracing $B(S_L, S_R)$. (Drawing is not geometrically accurate.)} \label{fig:left-right-trace}
\end{figure}

Generally there are two types tracings that we need to perform, depending on whether we are tracing a component of $B(S_L,S_R)$ or walking along $K$'s boundary in search of the next component. These are similar to the tracings we described in the case of the randomized algorithm. Note that generally, we are tracing counterclockwise about the sites of $S_L$ and clockwise about the sites of $S_R$ (see Fig.~\ref{fig:left-right-trace}).
\begin{description}
\item[Bisector Trace:] We are tracing the bisector between two sites $p_i \in S_L$ and $p_k \in S_R$ (see Figure~\ref{fig:insertion}(c)). We consider the sectors of $p_i$ and $p_k$ containing the current portion of the $(p_i,p_k)$-bisector. As in the bisector tracing for the randomized algorithm, we employ Lemma~\ref{lem:segment-bisector} to trace the bisector until either (1) it intersects one of the two sector edges, (2) hits another bisector, or (3) hits $K$'s boundary. 

In the first case, we erase the extension of the sector edge (shown as a broken line in Figure~\ref{fig:left-right-trace}(d)), create a new segment vertex here, add spokes to this vertex, and continue the tracing in the new sector. In the second case, we create a new Voronoi vertex, add spokes to this vertex from its three defining sites, and continue the trace along the appropriate bisector (always walking on a bisector between a left-side site and a right-side site). In the third case, we insert two spokes joining the point where the bisector encounters the boundary to $p_i$ and $p_k$, respectively. We then transition to the boundary trace described next.

\item[Boundary Trace:] There are two cases, depending on whether we are tracing along the boundary of a cell of $S_L$ or $S_R$. In the former case, we walk counterclockwise along $K$'s boundary until reaching the end of the current sector, and in the latter we walk clockwise. We consider each consecutive sector of the closest sites $p_i$ from $S_L$ and $p_j$ from $S_R$. By applying the parameterization from Lemma~\ref{lem:segment-bisector}, we determine whether the $(p_i,p_k)$-bisector intersects $K$'s boundary within the intersection of the current pair of sectors. If so, we identify this point on the $\bd K$, add spokes to each of $p_i$ and $p_k$, and then resume tracing along the $(p_i,p_k)$-bisector (see Figure~\ref{fig:left-right-trace}(b)--(d)). Otherwise, we encounter one of the two sector boundaries, and we continue the tracing along the boundary.
\end{description} 

The tracing process runs in $O(1)$ time for each newly added segment to the bisector, for a total of $O(m n)$ time. The overhead for the divide-and-conquer is dominated by the tracing time, which implies that divide-and-conquer algorithm's overall running time satisfies the recurrence $T(n) = 2 T(n/2) + m n$. This solves to $O(n m \log n)$. In summary, we obtain the following result.

\begin{theorem} \label{thm:DandC-time}
Given an $m$-sided convex polygon $K$ in $\RE^2$ and a set of $n$ sites $S$ in $K$, the divide-and-conquer algorithm computes $\Vor_K(S)$ in time $O(n m \log n)$.
\end{theorem}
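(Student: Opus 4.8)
The plan is to reduce everything to a recurrence. After presorting the sites by $x$-coordinate once, in $O(n\log n)$ time, the divide step---splitting $S$ by a vertical line $\ell$ into $S_L$ and $S_R$ with $|S_L|,|S_R|\le\lceil n/2\rceil$---costs $O(n)$ per level, which is dominated by the merge term. Each recursive call operates on the \emph{full} polygon $K$, so by Lemma~\ref{lem:total-complexity} the sub-diagrams $\Vor_K(S_L)$ and $\Vor_K(S_R)$ have size $O(m|S_L|)$ and $O(m|S_R|)$, in particular each is $O(mn)$, and triangulating them with spokes inflates their size by only a constant factor. Assuming the merge step is correct and runs in $O(mn)$ time, correctness of the whole algorithm follows by induction on $n$ (the base case, a single site, has all of $K$ as its cell), and the running time obeys $T(n)=2T(n/2)+O(mn)$, which solves to $O(mn\log n)$. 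Hence the theorem rests entirely on the merge bound, which is what I would establish in detail.

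The merge has two parts: constructing the bisector $B(S_L,S_R)$, and pruning the two sub-diagrams against it. Pruning---discarding from $\Vor_K(S_L)$ the cell portions on the $S_R$-side of $B(S_L,S_R)$, symmetrically for $\Vor_K(S_R)$, and stitching the survivors to $B(S_L,S_R)$---touches each feature of the two sub-diagrams and of the bisector a constant number of times, hence costs $O(mn)$. For the construction I would first argue correctness of the traversal. By Lemma~\ref{lem:left-right-bisector}(iii) no component of $B(S_L,S_R)$ is a closed loop, so every component has two endpoints on $\bd K$; by Lemma~\ref{lem:left-right-bisector}(ii) every point of $B(S_L,S_R)$---in particular every component endpoint on $\bd K$---is reached by starting at the lower endpoint of $\ell\cap K$, whose two closest sites in $S_L$ and $S_R$ and their containing triangles are supplied by the recursive calls, and alternately performing a bisector trace (advancing along a $(p_i,p_k)$-bisector with $p_i\in S_L$, $p_k\in S_R$ via Lemma~\ref{lem:segment-bisector} until it meets a sector edge, another bisector, or $\bd K$) and a boundary trace (walking along $\bd K$ through consecutive sectors of the current closest sites, counterclockwise when the closer site is in $S_L$ and clockwise when it is in $S_R$, until the bisector re-emerges from $\bd K$). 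Each component is discovered exactly once, and the traversal halts on returning to its start.

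The crux---and the step I expect to be the main obstacle---is bounding the running time of this traversal. Each elementary step is $O(1)$: inside a fixed pair of sectors the bisector is one segment whose parameterization is given by Lemma~\ref{lem:segment-bisector}, so we can solve in constant time for whichever of the three terminating events occurs first; inside a fixed sector a boundary edge is one arc, and we can test in constant time whether the current bisector crosses $\bd K$ within it. It remains to bound the number of steps. The bisector-trace steps are in bijection up to a constant with the bisector segments of $B(S_L,S_R)$, of which there are $O(mn)$ by Lemma~\ref{lem:left-right-bisector}(iv). For the boundary-trace steps, the key point is that while the closer site lies in $S_L$ the walk proceeds monotonically counterclockwise along $\bd K$ and never re-enters an interval it has already left, and symmetrically for $S_R$-cells; hence each boundary feature of $\Vor_K(S_L)$ and $\Vor_K(S_R)$ on $\bd K$---Voronoi-edge endpoints and sector-edge endpoints on $\bd K$---is crossed $O(1)$ times, and by Lemma~\ref{lem:total-complexity} there are $O(mn)$ such features. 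Establishing this monotonicity cleanly, so that boundary traces provably do no redundant work, is the delicate point; it follows from the disjointness of the components of $B(S_L,S_R)$ together with the star-shapedness of Voronoi cells, which force the $S_L$- and $S_R$-arcs of $\bd K$ to form a single interleaved cyclic sequence that the traversal sweeps through once. Summing, the construction is $O(mn)$, the merge is $O(mn)$, and the recurrence yields $O(mn\log n)$, as claimed.
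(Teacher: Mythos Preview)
Your proposal is correct and follows essentially the same approach as the paper: reduce to the recurrence $T(n)=2T(n/2)+O(mn)$ by arguing that the bisector-and-boundary tracing in the merge step costs $O(mn)$, then solve. Your write-up is in fact more careful than the paper's own one-line justification, since you explicitly separate the bisector-trace and boundary-trace contributions and isolate the monotonicity of the boundary walk as the point needing care; the paper simply asserts ``$O(1)$ time for each newly added segment to the bisector, for a total of $O(mn)$.''
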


\appendix
\section{Appendix} \label{sec:appendix}

{\LemSB*}

\begin{proof}
To simplify notation, let $\underline{x y}$ be a shorthand for $x - y$. Given that $(o+\alpha,p; x,y) = (o+\alpha',p'; x',y')$, we have
\[
    \frac{\underline{x (o+\alpha)} \cdot \underline{p y}}{\underline{x p} \cdot \underline{(o+\alpha) y}}
        ~ = ~ \frac{\underline{x' (o+\alpha')} \cdot \underline{p' y'}}{\underline{x' p'} \cdot \underline{(o+\alpha') y'}}.
\]
Observe that $\underline{(x+y)z} = x+y-z =\underline{x z} + y$ and $\underline{z(x+y)} = z-x-y = \underline{z x} - y$. Expanding both sides, yields
\begin{align*}
\MoveEqLeft
    \left( \underline{x (o+\alpha)} \cdot \underline{p y} \right) \left( \underline{x' p'} \cdot \underline{(o+\alpha') y'} \right)
     ~ = ~ \left( \underline{x' (o+\alpha')} \cdot \underline{p' y'} \right) \left( \underline{x p} \cdot \underline{(o+\alpha) y} \right) \\
\MoveEqLeft
    \left( (\underline{x o} - \alpha) \cdot \underline{p y} \right) \left( \underline{x' p'} \cdot (\underline{o y'} +\alpha')\right)
     ~ = ~ \left( (\underline{x' o} - \alpha') \cdot \underline{p' y'} \right) \left( \underline{x p} \cdot (\underline{o y} + \alpha) \right) \\
\MoveEqLeft
    \big( \underline{x o} \cdot \underline{p y} - \alpha \cdot \underline{p y} \big) 
    \big( \underline{x' p'} \cdot \underline{o y'} + \underline{x' p'} \cdot \alpha' \big)
        ~ = ~ \big( \underline{x' o} \cdot \underline{p' y'} - \alpha' \cdot \underline{p' y'} \big) 
                \big( \underline{x p} \cdot \underline{o y} + \underline{x p} \cdot \alpha \big) \\
\MoveEqLeft
    \underline{x o} \cdot \underline{p y} \cdot \underline{x' p'} \cdot \underline{o y'} 
    - \alpha \cdot \underline{p y} \cdot \underline{x' p'} \cdot \underline{o y'}
    + \underline{x o} \cdot \underline{p y} \cdot \underline{x' p'} \cdot \alpha' 
    - \alpha \cdot \underline{p y} \cdot \underline{x' p'} \cdot \alpha' \\
        & ~ = ~ 
    \underline{x' o} \cdot \underline{p' y'} \cdot \underline{x p} \cdot \underline{o y} 
    - \alpha' \cdot \underline{p' y'} \cdot \underline{x p} \cdot \underline{o y} 
    + \underline{x' o} \cdot \underline{p' y'} \cdot \underline{x p} \cdot \alpha 
    - \alpha' \cdot \underline{p' y'} \cdot \underline{x p} \cdot \alpha 
\end{align*}
Collecting common factors of $\alpha$ and $\alpha'$, we have
\begin{align}\label{eq:seg-bis-a}
\MoveEqLeft
    \left( \underline{x o} \cdot \underline{p y} \cdot \underline{x' p'} \cdot \underline{o y'} 
    - \underline{x' o} \cdot \underline{p' y'} \cdot \underline{x p} \cdot \underline{o y} \right)
    - \left( \underline{p y} \cdot \underline{x' p'}
    - \underline{p' y'} \cdot \underline{x p} \right) \alpha \alpha'  \nonumber \\ 
        & ~ = ~ 
    \left( \underline{p y} \cdot \underline{x' p'} \cdot \underline{o y'} 
    + \underline{x' o} \cdot \underline{p' y'} \cdot \underline{x p} \right) \alpha 
    - \left( \underline{x o} \cdot \underline{p y} \cdot \underline{x' p'} 
    + \underline{p' y'} \cdot \underline{x p} \cdot \underline{o y} \right) \alpha'.
\end{align}
By definition of $o$, we have 
\[
    \frac{\underline{x o} \cdot \underline{p y}}{\underline{x p} \cdot \underline{o y}}
        ~ = ~ \frac{\underline{x' o} \cdot \underline{p' y'}}{\underline{x' p'} \cdot \underline{o y'}} 
    \quad\Longrightarrow\quad
    \underline{x o} \cdot \underline{p y} \cdot \underline{x' p'} \cdot \underline{o y'} -
        \underline{x' o} \cdot \underline{p' y'} \cdot \underline{x p} \cdot \underline{o y} ~ = ~ 0.
\]
Thus, the first term in Eq.\eqref{eq:seg-bis-a} vanishes, which yields
\[
    \left( \underline{p y} \cdot \underline{x' p'}
    - \underline{p' y'} \cdot \underline{x p} \right) \alpha \alpha'
        ~ = ~ 
    \left( \underline{x o} \cdot \underline{p y} \cdot \underline{x' p'} 
    + \underline{p' y'} \cdot \underline{x p} \cdot \underline{o y} \right) \alpha'
    - \left( \underline{p y} \cdot \underline{x' p'} \cdot \underline{o y'} 
    + \underline{x' o} \cdot \underline{p' y'} \cdot \underline{x p} \right) \alpha.
\]
Letting:
\begin{align*}
    A & ~ = ~ \underline{p y} \cdot \underline{x' p'} - \underline{p' y'} \cdot \underline{x p}, \quad
    B   ~ = ~ \underline{x o} \cdot \underline{p y} \cdot \underline{x' p'} + \underline{p' y'} \cdot \underline{x p} \cdot \underline{o y},
    \quad \mathrm{and} \\
    C & ~ = ~ \underline{p y} \cdot \underline{x' p'} \cdot \underline{o y'} + \underline{x' o} \cdot \underline{p' y'} \cdot \underline{x p}, 
\end{align*}
and dividing by $\alpha \alpha'$, we see that the bisector is given by the following linear relation in $1/\alpha$ and $1/\alpha'$
\[
     A ~ = ~ \frac{B}{\alpha} - \frac{C}{\alpha'},
\]
as desired.
\end{proof}

\bibliography{shortcuts,hilbert}

\end{document}